\documentclass[11pt,a4paper]{article}

\usepackage{fullpage}

\usepackage{color}
\usepackage{graphicx}
\usepackage{amsmath}
\usepackage{amssymb}
\usepackage{amsthm}
\usepackage[noend]{algpseudocode}
\usepackage{algorithm}
\usepackage{subcaption}
\usepackage{url}
\usepackage{hyperref}
\usepackage{listings}
\usepackage{xcolor}
\usepackage{microtype}

\definecolor{codegreen}{rgb}{0,0.6,0}
\definecolor{codegray}{rgb}{0.5,0.5,0.5}
\definecolor{codepurple}{rgb}{0.58,0,0.82}
\definecolor{backcolour}{rgb}{0.95,0.95,0.92}

\lstdefinestyle{mystyle}{
   backgroundcolor=\color{backcolour},   
   commentstyle=\color{codegreen},
   keywordstyle=\color{magenta},
   numberstyle=\tiny\color{codegray},
   stringstyle=\color{codepurple},
   basicstyle=\ttfamily\footnotesize,
   breakatwhitespace=false,         
   breaklines=true,                 
   captionpos=b,                    
   keepspaces=true,                 
   numbers=left,                    
   numbersep=5pt,                  
   showspaces=false,                
   showstringspaces=false,
   showtabs=false,                  
   tabsize=2
}
\graphicspath{{./figures/}}

\newcounter{theorem}
\newtheorem{lemma}{Lemma}
\newtheorem{theorem}[lemma]{Theorem}

\newtheorem*{remark}{Remark}
\newtheorem{proposition}[lemma]{Proposition}
\newtheorem{corollary}[lemma]{Corollary}

\title{Computing Topological Transition Sets for Line–Line–Circle Trisectors in $\mathbb{R}^{3}$ 
\thanks{
This work was supported by National Research Foundation of Korea (NRF) grants funded by the Korea government (MSIT) (No. RS-2026-25471649 and No. RS-2024-00414849).
}
}

\author{Eunku Park\\
Department of Liberal arts and Sciences\\
DGIST, Republic of Korea\\
\texttt{parkeun9@dgist.ac.kr}
}

\begin{document}
\maketitle

\begin{abstract}
Computing the Voronoi diagram of curved or mixed geometric objects in $\mathbb{R}^3$ is a fundamental challenge in computational geometry. A major algorithmic bottleneck is evaluating exact geometric predicates—such as determining the topological structure of bisectors—which traditionally relies on general-purpose Cylindrical Algebraic Decomposition (CAD) that is often prohibitively expensive. In this paper, we present an efficient exact verification framework that, rather than performing a global decomposition of the ambient space as in CAD, directly characterizes the connectivity of the parameter space by computing its certified topological transition sets.

As the fundamental non-quadric base case, we analyze the canonical trisector family formed by two skew lines and one circle in $\mathbb{R}^3$. Because this mixed-object trisector yields a highly complex degree-eight space curve, the classical pencil-of-quadrics classification available for the three-line case does not apply. To overcome the computational intractability of a full general-position classification, we focus on identifying the exact transition walls of the canonical family.

Our algorithmic verification pipeline proceeds through a sequence of exact symbolic evaluations. First, exact Jacobian matrix computations certify the absence of affine singularities. Through projective closure, we prove that singular behavior is strictly isolated to a single point at projective infinity, $p_\infty$. Tangent-cone analysis at $p_\infty$ yields the explicit discriminant $\Delta_Q = 4ks^2(k-1)$, extracting $k=0$ and $k=1$ as candidate bifurcation values. To resolve the actual local topology from these asymptotic directions, we introduce a directional blow-up coordinate. This rigorously verifies that the reduced local equation in slope coordinates exhibits either stable X-shaped crossings or an absence of real asymptotic branches in the respective chambers, proving that the trisector's real topology remains locally constant there. Finally, we algebraically certify that $k=0$ and $k=1$ are actual topological walls exhibiting reducible splitting and real affine nodes.

By successfully computing the certified topological transition set for the line-line-circle setting, this paper provides a robust algorithmic primitive and the exact algebraic predicates required for constructing mixed-object Voronoi diagrams well beyond the quadric-only regime.
\end{abstract}
%%%%%%%%%%%%%%%%%%%%%%%%%%%%%%%%%%%%%%%%%%%%%%%%%%%%%%%%%%%%%%%%%%%%%%%%%%%%%%%%%%%%%%%%%%%%%%%%%

\section{Introduction}

The algorithmic construction of Voronoi diagrams for non-point sites in $\mathbb{R}^3$ remains one of the central challenges in computational geometry. Euclidean Voronoi diagrams for point sites are classical and well understood~\cite{aurenhammer1991voronoi, edelsbrunner1987algorithms}. In contrast, Voronoi diagrams for extended objects such as lines, circles, and algebraic curves in three dimensions are considerably more difficult. Even for the seemingly simple case of $n$ line sites, the exact combinatorial complexity of the Voronoi diagram is still unresolved: Chazelle et al.~\cite{chazelle1996lines} proved a lower bound of $\Omega(n^2)$, while Sharir~\cite{sharir1994almost} established an upper bound of $O(n^{3+\varepsilon})$ for every $\varepsilon>0$. One reason this gap is difficult to close is that, already for lines, the bisectors are nontrivial ruled quadrics, and their higher-order intersections quickly become algebraically intricate.

At the geometric heart of Voronoi computation lies the evaluation of \emph{exact geometric predicates}. To robustly construct or certify a Voronoi diagram, one must determine when Voronoi cells are adjacent, when Voronoi edges appear or disappear, and when local combinatorial changes occur. In three dimensions, such questions are governed by the topology of intersections of bisector surfaces. In particular, trisectors---intersections of three bisectors---form the one-dimensional object along which Voronoi vertices and higher-order transitions are organized. Thus, even when the ultimate goal is combinatorial or algorithmic, one is quickly forced to answer delicate topological questions about algebraic space curves.

This difficulty becomes especially pronounced in mixed-object settings. In the configuration of two skew lines and one circle considered in this paper, the bisectors are no longer all quadrics: one remains quadratic, while the line--circle bisector is quartic. Their intersection is therefore a degree-eight algebraic space curve. General-purpose tools such as Cylindrical Algebraic Decomposition (CAD) provide a theoretical decision procedure for such problems, but their exhaustive nature leads to prohibitive worst-case complexity~\cite{basu2006algorithms}. This creates a substantial gap between the existence of general real-algebraic solvers and the needs of computational geometry, where one seeks exact but structure-exploiting predicates adapted to a specific geometric family.

This gap is also relevant from a modeling perspective. In many engineering and scientific applications, geometric primitives are simplified in order to reduce computational cost; in particular, anisotropic particles are often replaced by spheres~\cite{lee2022robust}. Such simplifications are useful in many settings, but they are not always adequate. When the anisotropy or exact shape of an object affects adjacency, contact, or visibility structure, one must work with the original geometry rather than a spherical surrogate. Existing exact three-dimensional Voronoi computations cover objects such as planes, spheres, and cylinders~\cite{hanniel2008computing}, but do not treat circles as genuine sites in the mixed three-dimensional setting studied here. There are also exact predicate results for planar curved objects such as ellipses~\cite{emiris2006predicates}; however, these are inherently two-dimensional and do not directly resolve the corresponding three-dimensional problem.

For trisectors of line sites, several exact results are known. Everett et al.~\cite{everett2007Voronoi} obtained a complete topological description of the trisector of three skew lines by exploiting a pencil-of-quadrics structure, and this remains a foundational exact result for three-dimensional linear sites. Hemmer et al.~\cite{hemmer2010constructing} further developed exact geometric computation for Voronoi diagrams of lines in space. More recently, the study of the Voronoi diagram of four lines~\cite{papadopoulou2026voronoidiagramlinesmathbbr3} has highlighted continued progress on line structures in $\mathbb{R}^3$. These works collectively show how much can be achieved when all sites remain linear or when the relevant geometry still lies within a quadric framework.

The present paper addresses a different regime: a canonical line--line--circle configuration in which the classical pencil-of-quadrics structure no longer applies directly. Our contribution is not a complete classification of arbitrary mixed-object Voronoi diagrams in $\mathbb{R}^3$. Rather, we study a bounded-dimensional but genuinely non-quadric parametric family and show that one can still compute the exact \emph{topological transition set} of the associated trisector family by combining exact algebraic certification, projective compactification, and directional local analysis at infinity.

\paragraph{Why a canonical family?}
A natural question is why one should care about a canonical symmetric family rather than the full line--line--circle parameter space. There are two reasons. First, the mixed-object trisector is already algebraically nontrivial in this restricted setting: one bisector is quadratic, the other quartic, and the resulting curve has degree eight. Thus the difficulty is not an artifact of overgeneralization; it is already present in the first non-quadric base case. Second, canonical families often play an important role in exact computational geometry because they isolate the first obstruction beyond a previously understood regime. Here, the previously understood regime is the three-line case, whose structure is controlled by pencils of quadrics. The line--line--circle family is the first natural setting in which that mechanism breaks down, yet enough symmetry remains to make an exact transition-set analysis possible.

More broadly, Voronoi diagrams of curved objects have been studied in several related settings. Alt et al.~\cite{alt1995voronoi} investigated Voronoi diagrams of curved objects in the plane, and robust exact frameworks such as generic algebraic kernels~\cite{berberich2011generic} were developed for low-dimensional computation. In three dimensions, Hanniel and Elber~\cite{hanniel2008computing} studied Voronoi cells of planes, spheres, and cylinders, while Barequet et al.~\cite{barequet2024unbounded} analyzed the unbounded structure of higher-order Voronoi diagrams at infinity. There are also robust and approximate approaches for three-dimensional diagrams~\cite{adamou2021computing}, as well as exact algebraic-numerical methods for special planar curved objects such as conics~\cite{anton2004Voronoi}. At the opposite end of the spectrum, foundational exact methods for general semi-algebraic motion and arrangement problems, such as those arising in the piano movers problem~\cite{schwartz1983piano}, rely on CAD and related quantifier-elimination techniques, whose worst-case complexity is notoriously high~\cite{davenport1988real}.

A complementary line of work is the study of Voronoi cells of varieties~\cite{cifuentes2022voronoi}. Both that work and ours treat Voronoi-type structures of non-point algebraic objects using exact symbolic-algebraic methods. The geometric focus, however, is different: Voronoi cells of varieties studies the algebraic boundary of the local Voronoi cell of a fixed smooth point inside its normal space, whereas we study a three-dimensional mixed-site trisector as a global parametric object and certify the exact parameter values at which its topology changes.

Against this background, the line--line--circle case is important for two reasons. First, it represents a fundamental mixed-object configuration where the original curved geometry must be handled exactly, rather than being replaced by a simpler proxy. Second, it serves as a natural base case where the trisector transcends the quadric regime while remaining sufficiently structured to admit a complete transition-set analysis. To the best of our knowledge, this is the first work to provide an exact topological certification for a trisector lying strictly beyond the pencil-of-quadrics framework. In this sense, our result goes beyond previous exact treatments of planes, spheres, and cylinders by handling a mixed-object configuration with a non-quadric bisector, while also going beyond planar curved-site results by treating the full three-dimensional setting.

\paragraph{Transition sets instead of full CAD.}
The guiding idea of this paper is that, for bounded-dimensional parametric families, one can often avoid a full CAD of the ambient space by separating two logically distinct tasks:
\begin{enumerate}
    \item identifying the parameter values at which the topology \emph{may} change, and
    \item certifying that the topology is locally constant away from those values.
\end{enumerate}
We call the resulting parameter-space boundary the \emph{topological transition set}. This point of view is particularly natural in computational geometry, where the purpose of an exact predicate is usually not to classify every possible realization globally, but rather to distinguish stable regions of parameter space and detect the walls across which a combinatorial or topological event occurs.

In the present paper, this viewpoint leads to a specialized exact certification framework tailored to bounded-dimensional parametric families. Rather than decomposing the entire ambient space, we work in parameter space, compute candidate transition walls from exact algebraic conditions, and then certify topology on one exact representative from each chamber. The transition from the well-understood three-line trisector to the line--line--circle case may therefore be viewed as a shift from quadric-based geometry to genuinely non-quadric geometry, together with a corresponding shift from full classification toward transition-set computation.

\paragraph{Our results and method.}
We investigate the topological structure of trisectors within parametric families of mixed geometric objects, circumventing the need for full CAD. Our primary contribution is an exact symbolic verification framework for computing certified topological transition sets in such families. The framework naturally decomposes into two algorithmic tasks: first, identifying the candidate transition walls in parameter space where the topology may change; and second, certifying that the topological structure remains invariant within each resulting chamber.

The methodology proceeds as follows. In the affine part, candidate transition walls arise from Jacobian rank-drop conditions. At infinity, they arise from degeneration of the asymptotic direction structure of the projective closure. These conditions partition the parameter space into chambers. We then analyze the asymptotic behavior of the trisector at its unique point at infinity, $p_\infty$, by introducing slope coordinates. This step may be viewed as a directional blow-up in the mild sense relevant here: rather than studying the singular point as a whole, we separate the directions through which real branches may approach it. This allows us to recover the actual local real branch pattern, not merely the tangent cone, and to certify local topological constancy within each chamber. Finally, we use rational witness points to instantiate exact systems over $\mathbb{Q}$ and verify that the candidate boundary values indeed correspond to actual topological transitions.

We apply this framework to the canonical symmetric line--line--circle trisector family in $\mathbb{R}^3$, serving as the first mixed-object base case beyond the classical three-line setting. For this family, we certify that the trisector is affine-smooth and that its projective closure has a unique singular point at infinity, $p_\infty$. We derive an explicit discriminant,
\[
\Delta_Q = 4ks^2(k-1),
\]
governing the candidate asymptotic transition set. By refining the analysis at $p_\infty$ through slope coordinates, we prove chamberwise local topological constancy away from the transition set. We then show that the boundary values $k=0$ and $k=1$ are actual topological transition values by computing explicit reducible splittings with real affine nodes. Consequently, for the canonical family, we compute the exact certified topological transition set.

As a comparative baseline, we revisit the classical three-line trisector using the same symbolic viewpoint and recover the absence of asymptotic bifurcation. This comparison clarifies which parts of the argument are specific to the mixed-object setting with non-quadric bisectors and which belong to a more general transition-set methodology.

\paragraph{Main contributions.}
The contributions of this paper may be summarized as follows.
\begin{enumerate}
    \item We formulate a transition-set framework for exact topological certification of bounded-dimensional parametric families of trisectors, avoiding a full CAD of the ambient space.
    \item We identify the canonical line--line--circle family as the first natural mixed-object base case beyond the classical pencil-of-quadrics regime.
    \item For this family, we prove affine smoothness, isolate a unique projective singular point at infinity, and derive the explicit asymptotic discriminant $\Delta_Q=4ks^2(k-1)$.
    \item We show, using slope-coordinate analysis at infinity, that the real projective topology is locally constant in each chamber of the parameter complement.
    \item We prove that the boundary values $k=0$ and $k=1$ are actual topological transition values by exhibiting explicit reducible splittings with real affine nodes.
    \item We provide a structured exact-verification pipeline, together with computer algebra and SMT scripts, that can serve as a template for future mixed-object Voronoi investigations.
\end{enumerate}

\begin{remark}[Complexity profile of the verification pipeline]\label{rem:complexity-intro}
The verification pipeline consists of two stages: a parameter-space transition computation and a fixed-dimensional exact chamber certification step. Under standard elimination bounds, the overall complexity is
\[
O\!\left(s^{m+1}d^{O(m)} + s^m d^{O(n)}\right),
\]
where $m$ is the parameter dimension, $n$ is the ambient spatial dimension, $s$ is the number of defining constraints, and $d$ is the maximum polynomial degree. Since $n$ and $s$ are fixed constants in our setting, the dominant cost depends on the parameter dimension and the algebraic degree. The point of this estimate is not to claim practical efficiency in full generality, but to explain why the bounded-dimensional transition-set strategy is qualitatively more tractable than a full CAD-based decomposition of the ambient space.
\end{remark}

\begin{theorem}\label{thm:geometry-intro}
For the canonical symmetric configuration of skew lines and one circle, parameterized by $(k,R,t)\in\mathbb{R}^3$ under the general-position assumptions $R>0$, $t\neq 0$, $k\neq 0$, and $k\neq 1$, the trisector is an algebraic space curve of degree $8$ with no affine singularities in $\mathbb{R}^3$. Its projective closure has a unique singular point at infinity, $p_\infty$, and the candidate asymptotic transition set is governed by the explicit discriminant
\[
\Delta_Q = 4ks^2(k-1).
\]

Furthermore, a directional analysis at $p_\infty$ in slope coordinates certifies that the real projective topology is locally constant on each connected component of the complement of the boundary values $k=0$ and $k=1$. At these boundaries, the trisector undergoes explicit reducible splittings with real affine nodes; thus both $k=0$ and $k=1$ are actual topological transition values. The certified topological transition set for this canonical family is
\[
\Sigma=\{k=0\}\cup\{k=1\}.
\]
\end{theorem}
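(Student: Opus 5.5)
The plan is to first fix the canonical configuration explicitly and reduce the trisector to two polynomial equations. I take the skew lines as $\ell_1=\{y=0,\ z=t\}$ and $\ell_2=\{x=0,\ z=-t\}$, possibly with a slope asymmetry encoded by $k$. The circle $C$ of radius $R$ lies in a plane symmetric with respect to the configuration. The trisector is then the common zero set of the $\ell_1$--$\ell_2$ bisector $Q=0$ and the $\ell_1$--$C$ bisector $F=0$. Here $Q$ is a quadric (a hyperbolic paraboloid for skew lines). The line--circle bisector is obtained by squaring away the radical in $\operatorname{dist}(p,C)=\sqrt{(\sqrt{x^2+y^2}-R)^2+z^2}$, which gives a quartic $F$. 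Bézout then bounds the degree by $2\cdot 4=8$. Equality holds once I check that $Q$ and $F$ share no common component, which I would certify by showing that $\gcd(Q,F)=1$ and that the resultant in one variable is not identically zero. Whether the extraneous sheet created by squaring must be discarded would be handled by a sign check on a witness point.

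For affine smoothness, I would compute the $2\times2$ minors of the Jacobian $\partial(Q,F)/\partial(x,y,z)$. I would then show that the ideal $(Q,F,\text{minors})$ has no real points for parameters satisfying $R>0$, $t\ne0$ and $k\ne0,1$. Concretely, I would eliminate $x,y,z$ by a Gröbner basis with an elimination order to get a polynomial in $(k,R,t)$. Its factors should be supported only on $k(k-1)tR=0$, possibly times factors with no real zeros, which I would confirm by sum-of-squares or an SMT certificate.

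For the behaviour at infinity, I would homogenize with $w$. On $w=0$, the leading forms of $Q$ and $F$ should intersect in a single point $p_\infty$, and I would check that the rank of the projective Jacobian drops there. In a chart around $p_\infty$, I would expand both equations, eliminate one coordinate using the smooth one, and read off the lowest-order quadratic form in the remaining two local variables. The discriminant of this form should come out as $\Delta_Q=4ks^2(k-1)$, where $s$ is an auxiliary scaling depending on $t$ and $R$. Its sign determines whether the tangent cone is two real lines ($\Delta_Q>0$, so $k<0$ or $k>1$) or has no real lines ($0<k<1$).

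The main obstacle is that the tangent cone alone does not determine the real branch structure, so I would perform the directional blow-up. I would set $v=\lambda u$, divide by the appropriate power of $u$, and study the strict transform near $u=0$. When $\Delta_Q>0$ I aim to show that each of the two real slopes is a simple root, so the Implicit Function Theorem gives exactly two smooth real branches crossing transversally (an X-shape). When $\Delta_Q<0$ I aim to show that no real points of the strict transform lie over $u=0$. These conditions are open in the parameters, and the affine curve stays smooth. Combined with Thom's isotopy lemma applied to the family over each chamber (the projection is proper after compactification, and the singularity type at $p_\infty$ is constant), this gives local constancy of the real projective topology on each chamber.

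Finally, I would substitute $k=0$ and $k=1$ and factor $Q$ and $F$, or the ideal they generate, over $\mathbb{Q}(R,t)$. This should exhibit the splitting into lower-degree components, and I would solve for an explicit rational intersection point of two components to show that it is a real affine node. Since the topology changes at these values (new singularities appear), both values lie in $\Sigma$, while constancy on the chambers shows that nothing else does.
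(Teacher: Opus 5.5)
Your outline follows the paper's route: B\'ezout, Jacobian elimination, a unique point $p_\infty$, the tangent-direction discriminant, the slope chart $Y=uX$, and factorization at $k=0,1$. Two steps would not go through as written, and one of them cannot go through at all, because the statement is false on a hypersurface of admissible parameters. First, a setup issue: the family you wrote down is not the paper's. In the paper $L_1$ is the $x$-axis, $L_2$ lies in $z=1$ with direction $(c,s,0)$, $C$ is centred at $(0,0,k)$ in the plane $z=k$, and $s=\sin\alpha=2t/(1+t^2)$ depends on $t$ only. So $F_1=y^2-(xs-yc)^2+2z-1$ and $F_2=(x^2-2kz+R^2+k^2)^2-4R^2(x^2+y^2)$, and your predicted outputs are postulated rather than derived.

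Second, the analysis at infinity. In the chart $Z=1$, solving $F_1^h=0$ gives $W=Q(X,Y)+O(\|(X,Y)\|^4)$. The initial form of the remaining equation is not a quadratic form but the square $E(X,Y)^2$, with $E=X^2-2kQ$. After $Y=uX$ and division by $X^4$ you get $g_k(u)^2+X^2H(X,u)=0$. So a simple real root $u_0$ of $g_k$ is a double root on the exceptional line. The Implicit Function Theorem does not apply there, and $(0,u_0)$ is either a real crossing or an isolated real point, depending on the sign of $H(0,u_0)$. The missing ingredient is the paper's sign lemma. Since $g_k(u_0)=0$ forces $q(u_0)\neq0$, one gets $H(0,u_0)=-4R^2(1+u_0^2)q(u_0)^2<0$, hence $(u-u_0)^2=\lambda X^2+O(X^3)$ with $\lambda>0$. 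In the original chart this gives two pairs of mutually tangent real branches at $p_\infty$ (multiplicity $4$), not two transversal branches.

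Third, and decisively, your affine-smoothness elimination will not return only factors supported on $k(k-1)tR=0$ times positive factors, and no argument can make it do so. On the $z$-axis, $F_1=0$ forces $z=\tfrac12$, and $F_2(0,0,\tfrac12)=(R^2+k^2-k)^2$. Also $\nabla F_2=2A\nabla A-8R^2(x,y,0)$, with $A=x^2-2kz+R^2+k^2$, vanishes at $x=y=A=0$. Hence for every $0<k<1$, $t\neq0$ and $R=\sqrt{k(1-k)}$, the point $(0,0,\tfrac12)$ is an affine singular point of the trisector. Geometrically, $C$ then lies on the sphere whose diameter is the common perpendicular segment, so this point is at distance $\tfrac12$ from $L_1$, from $L_2$, and from every point of $C$.

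A concrete instance is $(k,R,t)=(\tfrac12,\tfrac12,1)$, where $c=0$ and $s=1$. Substituting $z=\tfrac12(1+x^2-y^2)$ into $F_2$ gives $\tfrac14(x^2+y^2)(x^2+y^2-4)$. The real curve is therefore an oval plus the isolated point $(0,0,\tfrac12)$, whereas for $R=\tfrac12\pm\delta$ with small $\delta\neq0$ it is two disjoint ovals. So both the affine-smoothness claim and the chamberwise constancy on $\{0<k<1\}$ fail, and $\{R^2=k(1-k)\}$ belongs to the transition set. Any correct elimination in your direct formulation will contain the factor $R^2+k^2-k$.

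The paper misses this because it certifies smoothness of a lifted incidence model in $(x,y,z,u,v)$ with $u^2+v^2=R^2$ and $yu-xv=0$. Its polynomial $kR^2(1-k)[R^2+s^2(k-k^2-R^2)^2]$ only detects off-axis singularities. Over the axis point, the fibre of the lift is the whole circle, so smoothness of the lift does not descend to the trisector. Your direct Jacobian computation, combined with your Thom-isotopy argument, is the right tool once this extra wall is added. That isotopy argument is also a more complete justification of chamberwise constancy than the paper's, which only controls $p_\infty$.
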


\paragraph{Organization of the paper.}
Section~\ref{sec:prelim} collects the exact algebraic and semi-algebraic tools used in the certification framework. Section~\ref{sec:framework} formulates the verification pipeline abstractly. Section~\ref{sec:algebra} introduces the canonical line--line--circle family. Section~\ref{sec:affine} proves affine smoothness. Section~\ref{sec:infinity} identifies the candidate transition walls and the witness chambers in parameter space. Section~\ref{sec:local-infinity} analyzes the local branch structure at infinity via slope coordinates and completes the transition-set analysis for the canonical family. Section~\ref{sec:three-line} revisits the classical three-line case as a comparative benchmark. Section~\ref{sec:complexity} records the symbolic cost of the verification pipeline. Section~\ref{sec:conc} concludes with remarks on the scope of the framework and possible extensions. The appendix contains raw verification scripts.

\section{Preliminaries}\label{sec:prelim}

This section records the algebraic and semi-algebraic tools used in our exact certification framework. Since our goal is algorithmic rather than foundational, we do not attempt to develop general real-algebraic geometry in full. Instead, we isolate a small collection of exact principles that are sufficient for the transition-set computations carried out later in the paper. The guiding philosophy is the following: rather than solving a completely general topological classification problem, we only need exact tests that certify whether certain geometric events can or cannot occur, and that organize the parameter space into chambers on which the topology remains stable.

Throughout the paper, we work over $\mathbb{Q}$ when performing exact symbolic computations, and over $\mathbb{R}$ when discussing the realized topology of the trisector. This separation is important. The symbolic part of the argument is carried out over an exact coefficient field, while the geometric conclusions concern the real points of the corresponding algebraic sets. Standard references for the material summarized here include~\cite{basu2006algorithms, cox1997ideals, fulton1969algebraic}.

\subsection{Exact algebraic certification in affine and projective space}

A recurring task in our pipeline is to certify that a certain geometric event does \emph{not} occur. In affine space, this is expressed by proving that a polynomial system has no solution. The basic algebraic mechanism is the classical Nullstellensatz criterion.

\begin{proposition}[Exact emptiness test]\label{prop:nullstellensatz}
Let $k$ be an algebraically closed field and let
\[
I=\langle f_1,\dots,f_s\rangle \subseteq k[x_1,\dots,x_n].
\]
Then
\[
V(I)=\emptyset \quad\Longleftrightarrow\quad 1\in I.
\]
\end{proposition}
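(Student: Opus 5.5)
The statement is the weak Nullstellensatz, so I will reduce it to Hilbert's Nullstellensatz in its standard form rather than re-derive commutative algebra. The implication $1\in I \Rightarrow V(I)=\emptyset$ is immediate. If $1=\sum_i g_i f_i$ with $g_i\in k[x_1,\dots,x_n]$ and $a\in V(I)$, then evaluating at $a$ gives
\[
1=\sum_i g_i(a)f_i(a)=0,
\]
a contradiction. This direction holds over any field, which is the direction used computationally: a certificate $1\in I$, obtained for instance as a reduced Gröbner basis equal to $\{1\}$, certifies emptiness over $\mathbb{C}$ and hence over $\mathbb{R}$.

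For the converse, assume $1\notin I$, so $I$ is a proper ideal. By Zorn's lemma, or simply by Noetherianity of $k[x_1,\dots,x_n]$ via the Hilbert basis theorem, $I$ is contained in some maximal ideal $\mathfrak m$. The key step is to show that $\mathfrak m=\langle x_1-a_1,\dots,x_n-a_n\rangle$ for some $a\in k^n$. Given this, every $f_j\in I\subseteq\mathfrak m$ vanishes at $a$, so $a\in V(I)$ and $V(I)\neq\emptyset$.

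To identify $\mathfrak m$, I consider the field $L=k[x_1,\dots,x_n]/\mathfrak m$. It is a finitely generated $k$-algebra that is a field, so Zariski's lemma (equivalently, Noether normalization) shows that $L$ is a finite, hence algebraic, extension of $k$. Since $k$ is algebraically closed, $L=k$. Let $a_i\in k$ be the image of $x_i$ in $L$. Then $x_i-a_i\in\mathfrak m$ for each $i$. The ideal $\langle x_1-a_1,\dots,x_n-a_n\rangle$ is already maximal, since the quotient by it is $k$, so it equals $\mathfrak m$.

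The only nontrivial ingredient is Zariski's lemma. I would cite it from the standard references \cite{cox1997ideals, fulton1969algebraic} rather than reprove it. If a self-contained route were wanted, I would first try Noether normalization: $L$ is integral over a polynomial subring $k[y_1,\dots,y_r]$, and since $L$ is a field, that subring must also be a field, which forces $r=0$.

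Finally, I would note the caveat relevant to the paper's pipeline. Because $\mathbb{R}$ is not algebraically closed, the converse fails over $\mathbb{R}$. Hence $1\notin I$ does not imply that real points exist, and the paper's non-existence certificates should be read as establishing emptiness over $\mathbb{C}$, which implies real emptiness.
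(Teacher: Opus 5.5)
Your proof is correct. The easy direction goes by evaluation, and the converse uses a maximal ideal $\mathfrak m\supseteq I$ together with Zariski's lemma to force $k[x_1,\dots,x_n]/\mathfrak m=k$; this is the standard textbook argument. The paper gives no proof of its own and simply invokes the classical Nullstellensatz from the cited references; it uses only the direction $1\in I\Rightarrow V(I)=\emptyset$ over $\mathbb{C}$ to conclude real emptiness, exactly as your closing caveat describes.
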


From the computational point of view, Proposition~\ref{prop:nullstellensatz} is the algebraic reason that ideal-membership tests can certify emptiness exactly. In the present paper, we use only the simplest consequence: if $1\in I$, then the complex solution set is empty, and therefore the real solution set is empty as well. This is the mechanism by which we certify the absence of affine singularities. Concretely, once the relevant geometric condition is encoded as a polynomial system, proving that the associated ideal contains $1$ immediately rules out the event in question.

This viewpoint is particularly convenient for singularity detection. A singular point of an affine algebraic curve or surface is characterized by a rank drop of the Jacobian matrix. Hence the singular locus is obtained by adjoining appropriate Jacobian-minor conditions to the defining equations. If the resulting ideal is the unit ideal, then the singular locus is empty. In our setting, this converts a geometric question about smoothness into an exact symbolic computation over $\mathbb{Q}$.

To study behavior at infinity, affine coordinates are no longer sufficient. We therefore pass from an affine ideal
\[
I=\langle F_1,\dots,F_r\rangle \subset \mathbb{R}[x,y,z]
\]
to its projective closure by homogenization. If $W$ denotes the homogenizing variable, then the homogenized ideal $I^h$ defines a projective variety in $\mathbb{P}^3$. However, naive homogenization may introduce extraneous components supported entirely on the hyperplane at infinity $\{W=0\}$. Such components need not correspond to actual limit points of the original affine variety; they may be artifacts of the algebraic embedding.

To remove these artifacts, we use saturation with respect to $W$. More precisely, the true projective closure is represented by
\[
(I^h : W^\infty),
\]
and the points at infinity are detected by intersecting this saturated ideal with the hyperplane $W=0$, that is, by considering
\[
(I^h : W^\infty)+\langle W\rangle.
\]
Geometrically, saturation removes components that lie entirely on the hyperplane at infinity but do not arise as genuine limits of affine points. This is the algebraic device that lets us distinguish true asymptotic behavior from spurious projective artifacts.

We also use only a very coarse degree estimate. If hypersurfaces of degrees $d_1,\dots,d_r$ intersect properly in projective space, then the degree of the resulting variety is bounded by
\[
d_1\cdots d_r
\]
by standard Bézout-type bounds~\cite{fulton1969algebraic, basu2006algorithms}. In the present paper, we invoke such bounds only at a bookkeeping level: they control the degree of the trisector and the size of the zero-dimensional systems appearing in the exact certification steps. No delicate intersection-theoretic argument is needed.

\subsection{Local structure at infinity and directional analysis}

The only non-affine singular behavior in our canonical mixed-object family occurs at a unique point at infinity. To analyze it, we pass to a local affine chart around that point and study a local defining equation there. The first nonzero homogeneous term of that local equation determines the \emph{tangent cone}. In our context, the tangent cone records the asymptotic directions through which real branches of the trisector may approach the point at infinity.

This first-order directional information is enough to detect \emph{candidate} topological transitions. The relevant algebraic criterion is expressed by the discriminant of the quadratic form defining the tangent directions.

\begin{proposition}[Tangent-direction non-collapse]\label{prop:tangent-cone}
Let $P$ be a higher-order singular point of a parameterized algebraic curve, and let the first nonzero homogeneous term of a local defining equation at $P$ be governed by a quadratic form $E(X,Y)$. If the discriminant $\Delta_E$ remains nonzero throughout a connected parameter region, then $E(X,Y)$ continues to factor into two distinct linear forms over the corresponding splitting field. In particular, the tangent-direction pattern does not collapse from two distinct directions to a repeated one.
\end{proposition}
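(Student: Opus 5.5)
The proof rests on the factorization of a binary quadratic form together with continuity of its coefficients in the parameters. Write
\[
E(X,Y)=aX^2+bXY+cY^2,
\]
with coefficients $a,b,c$ polynomial (hence continuous) in the parameters. Its discriminant is
\[
\Delta_E=b^2-4ac,
\]
which is therefore also continuous.

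The first step is the factorization itself. At a fixed parameter value in the region, $E$ is not identically zero, since it is the first nonzero homogeneous term. We treat two cases.

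\begin{enumerate}
\item If $a\neq 0$, we factor
\[
E=a\,(X-r_1Y)(X-r_2Y),\qquad r_{1,2}=\frac{-b\pm\sqrt{\Delta_E}}{2a}.
\]
These roots lie in the splitting field $\mathbb{Q}(\text{params})(\sqrt{\Delta_E})$, or over $\mathbb{C}$ pointwise. They are distinct exactly when $\Delta_E\neq 0$.
\item If $a=0$, then $E=Y(bX+cY)$. Here $\Delta_E=b^2$, so $\Delta_E\neq 0$ forces $b\neq 0$, and the two linear factors $Y$ and $bX+cY$ are non-proportional.
\end{enumerate}

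The same conclusion follows symmetrically by swapping the roles of $X$ and $Y$. So at every point of the region, $E$ splits into two distinct linear forms, and the tangent cone at $P$ consists of two distinct directions.

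The second step is the non-collapse statement. A collapse to a repeated direction would mean that at some parameter value in the region $E=\ell^2$ for a single linear form $\ell$. Expanding, $\ell^2$ has discriminant zero, which contradicts $\Delta_E\neq 0$ throughout the region.

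Connectedness then supplies more than pointwise non-collapse. $\Delta_E$ is continuous and nonvanishing on a connected set, so by the intermediate value theorem it has constant sign (real parameters) there. Hence the type of the pattern is also constant:
\begin{itemize}
\item two real distinct directions when $\Delta_E>0$;
\item two complex-conjugate directions when $\Delta_E<0$.
\end{itemize}
The roots $r_{1,2}$ vary continuously along any path in the region, so the two directions never merge.

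The only real obstacle is the degenerate leading coefficient $a=0$, handled by the case split above. The rest is elementary. I would also note explicitly that this only certifies the tangent cone, not the actual branch structure at $P$. That limitation is why the paper later needs the slope-coordinate blow-up analysis.
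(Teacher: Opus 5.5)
Your proof is correct and matches what the paper relies on: the paper gives no separate proof of this proposition, treating it as the standard fact (invoked inline in the proof of Theorem~\ref{thm:infinity-analysis}) that a binary quadratic form with nonzero discriminant splits into two distinct linear factors, which is exactly your argument. Your sign-constancy remark via connectedness is sound, though the claim that $r_{1,2}$ vary continuously along paths should be read in $\mathbb{P}^1$, since the coefficient $a$ can vanish inside a chamber (here $a=1-ks^2=0$ when $ks^2=1$ with $k>1$), which sends $r=X/Y$ through $\infty$.
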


The role of Proposition~\ref{prop:tangent-cone} in this paper is deliberately modest. We use it only to identify \emph{candidate} topological walls at infinity. By itself, it does not determine the full local topology. The reason is that our singularity at infinity is higher-order: the tangent cone captures only the first-order directional picture, while the actual real branch structure may depend on higher-order terms. In particular, knowing that the tangent cone has two distinct directions does not yet tell us whether real branches actually occur in those directions, whether they cross, or whether they disappear in the real locus after higher-order refinement.

To recover the actual local real structure, we refine the tangent-direction analysis by introducing \emph{slope coordinates}, for example
\[
Y=uX.
\]
From the computational-geometry viewpoint, this is a natural operation: instead of examining the singular point directly, we separate the possible directions through which the curve may approach it. One may regard this as a simple directional blow-up adapted to the present problem. After substitution into the local equation, the singular behavior is reorganized into an equation in $(X,u)$, where the exceptional line $X=0$ parametrizes directions. This makes it possible to distinguish three qualitatively different situations:
\begin{itemize}
    \item two real directions with actual real branches,
    \item two complex directions and hence no real asymptotic branch,
    \item degenerate directions where the candidate transition set must be examined more closely.
\end{itemize}
This directional refinement is the key tool for proving local topological constancy away from the transition set.

\subsection{Parameter cells, witness points, and sampling complexity}

Once a candidate transition set $\Sigma$ has been identified, its complement in parameter space is decomposed into semi-algebraic cells. In a full CAD-based approach, one would attempt to decompose the entire ambient space and then study the induced topology of the variety inside each cell. Our framework is much more modest. We do not compute a full ambient decomposition. Instead, we use the parameter-space cells only as an organizational device: for each connected region of parameter space away from $\Sigma$, we choose one exact rational witness point and certify the topology of the corresponding instantiated system directly.

This strategy is meaningful only if the number of connected regions remains controlled. The relevant quantitative fact is the Milnor--Thom bound.

\begin{theorem}[Milnor--Thom bound~\cite{basu2006algorithms}]\label{thm:milnor-thom}
Let $V\subset \mathbb{R}^m$ be a semi-algebraic set defined by $s$ polynomial equalities and inequalities, each of degree at most $d$. Then
\[
\beta_0(V)=O((sd)^m).
\]
\end{theorem}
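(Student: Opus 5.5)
The approach is to reduce the semi-algebraic statement to the classical Oleinik--Petrovsky--Milnor--Thom bound for real algebraic sets. That bound says the sum of the Betti numbers of the zero set of polynomials of degree at most $D$ in $\mathbb{R}^m$ is at most $D(2D-1)^{m-1}$. The reduction then replaces the $s$ constraints of degree $d$ by a single auxiliary polynomial of degree $D=O(sd)$, so that the count becomes $D(2D-1)^{m-1}=O((sd)^m)$.

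\textbf{Step 1: reduce to sign conditions.} Let $f_1,\dots,f_s$ be the defining polynomials. A realizable sign condition is a vector $\sigma\in\{-1,0,+1\}^s$, and its realization is the set $\{x : \mathrm{sign}\,f_i(x)=\sigma_i \text{ for all } i\}$. The set $V$ is a union of realizations of such sign conditions. Every connected component of $V$ contains at least one connected component of one of these realizations, and distinct components of $V$ contain disjoint ones. Hence
\[
\beta_0(V)\le \sum_{\sigma}\beta_0(\mathrm{Reali}(\sigma)),
\]
and it suffices to bound the right-hand side by $O((sd)^m)$.

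\textbf{Step 2: replace each sign condition by a compact algebraic set.}
\begin{enumerate}
\item Choose one point in each connected component $C$ of each realization. Choose $R$ so large that the ball $B_R$ contains all these finitely many points.
\item Choose $0<\delta\ll\varepsilon\ll 1$, generic. At each chosen point with $\sigma_i\neq 0$ we have $|f_i|>\varepsilon$; with $\sigma_i=0$ we have $|f_i|<\delta$.
\item Consider the hypersurface
\[
H=\Bigl\{\,\bigl(|x|^2-R^2\bigr)\prod_{i=1}^{s}\bigl(f_i^2-\varepsilon^2\bigr)\bigl(f_i^2-\delta^2\bigr)=0\,\Bigr\},
\]
which has degree $D\le 4sd+2$.
\item Each chosen component $C$ yields a connected component of $B_R\setminus H$ containing its chosen point. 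Distinct components must give distinct components of $B_R\setminus H$. The reason is that two of them lying in one component of $B_R\setminus H$ would be joined by a path avoiding every level $|f_i|=\varepsilon,\delta$. Along such a path every $f_i$ keeps its band, namely $|f_i|<\delta$, $\delta<|f_i|<\varepsilon$, or $|f_i|>\varepsilon$ with a fixed sign.
\end{enumerate}

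\textbf{Step 3: count components of the complement.} The components of $B_R\setminus H$ are controlled by the Betti numbers of the compact set $H\cap \overline{B_R}$ via Alexander duality. Alternatively, one can count critical points of a generic linear function on the bounded components. Either way the Milnor bound gives $O(D)^m=O((sd)^m)$. Multiplying by the constant number of level choices per polynomial does not change the order.

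\textbf{Main obstacle.} The hardest step is the injectivity claim in Step 2 for sign conditions containing equalities. There a component of $\{f_i=0\}$ gets thickened into the band $|f_i|<\delta$, and one must rule out that two distinct components of the realization become joined inside that thin tube. I would handle this by taking $\delta\to 0$ along a generic sequence and applying a curve-selection / compactness argument on $\overline{B_R}$. Suppose $\delta$-connecting paths existed for arbitrarily small $\delta$. Then, by Hausdorff limits, the two components would lie in one connected piece of $\{f_i=0\}\cap\{\text{other signs}\}$, a contradiction. If this limiting argument proves delicate, the fallback is to work directly with the Basu--Pollack--Roy formulation, which bounds $\sum_\sigma \beta_0(\mathrm{Reali}(\sigma))$ by $\sum_{j\le m}\binom{s}{j}4^j\,d(2d-1)^{m-1}=O((sd)^m)$. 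That bound handles the equalities by intersecting at most $m$ perturbed hypersurfaces at a time.
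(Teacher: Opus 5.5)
The paper gives no proof of this statement; it quotes it from Basu--Pollack--Roy, so the only comparison available is with the textbook argument. Your proof is correct.

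\textbf{Steps 1--2.} These are essentially the standard reduction: bound $\beta_0(V)$ by the sum of $\beta_0$ over realizable sign conditions, and handle equalities with two levels $\pm\varepsilon,\pm\delta$, $\delta\ll\varepsilon$. The one difference is that Basu--Pollack--Roy take $\varepsilon,\delta$ infinitesimal in a real closed extension, while you use an honest limit. Your limit argument does resolve the obstacle you flag. With $R,\varepsilon$ fixed, the compact sets $\{x\in\overline{B_R}: |f_i|\le\delta\ (\sigma_i=0),\ \sigma_i f_i\ge\varepsilon\ (\sigma_i\neq 0)\}$ shrink as $\delta\to 0$ to a compact subset of $\mathrm{Reali}(\sigma)$. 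So a Hausdorff limit of connecting paths would join the two chosen points inside $\mathrm{Reali}(\sigma)$. The $\varepsilon$-margin and the ball are exactly what make this limit land in $\mathrm{Reali}(\sigma)$.

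\textbf{Step 3.} This is where you genuinely diverge. Basu--Pollack--Roy perturb into general position so that at most $m$ of the level hypersurfaces meet, and count through $j$-fold intersections. That yields the sharper $\sum_{j\le m}\binom{s}{j}4^j d(2d-1)^{m-1}$. You instead multiply everything into one hypersurface of degree $D\le 4sd+2$ and bound the components of its complement, a Warren-type count. This is simpler and enough for $O((sd)^m)$, but loses the $\binom{s}{j}$ refinement.

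Two details in Step 3 need fixing.
\begin{enumerate}
\item $H\cap\overline{B_R}$ is not an algebraic set. After Alexander duality you therefore need Milnor's bound for sets defined by polynomial inequalities, or Warren's theorem directly.
\item A linear function has no critical point inside an open bounded component, so the alternative count should use $P$ itself. Since $\partial B_R\subseteq H$, each component of $B_R\setminus H$ is a bounded component of $\{P\neq 0\}$. It therefore contains an interior extremum of $|P|$, hence a whole connected component of $\{\nabla P=0\}$, on which $P$ is a nonzero constant. This gives at most $(D-1)(2D-3)^{m-1}$ components.
\end{enumerate}
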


Theorem~\ref{thm:milnor-thom} explains why witness-based certification is compatible with bounded-dimensional parameter spaces. Once the transition set has been expressed by finitely many polynomial conditions, the number of connected parameter cells is only singly exponential in the parameter dimension $m$. Thus, although the symbolic work needed to locate the transition set may still be nontrivial, the chamberwise topological certification stage remains manageable in principle: one needs only finitely many exact witnesses, one per chamber.

In the canonical line--line--circle family treated here, this reduction is especially effective because the parameter dimension is small and the transition set turns out to have an explicit algebraic description. Here, the symbolic derivation and witness-based sampling play complementary roles: the former identifies the possible boundary values where the topology might change, while the latter certifies the actual realized topology within the open chambers between them.

\section{Exact Certification Pipeline}\label{sec:framework}

We now describe the exact certification pipeline used throughout the paper. Its purpose is to compute topological transition sets for bounded-dimensional parametric families without performing a full Cylindrical Algebraic Decomposition (CAD) of the ambient space. The guiding principle is to separate two logically distinct tasks:
\begin{enumerate}
    \item determining \emph{where} the topology may change, and
    \item certifying \emph{how} the topology behaves away from those changes.
\end{enumerate}
This separation is the central reason that the framework remains tractable in the present setting.

The input is a parameterized family of algebraic curves over a parameter domain $U \subset \mathbb{R}^m$, and the output is a finite candidate transition set together with an exact certification result for one representative parameter value from each connected chamber of its complement. The framework is designed for \emph{transition-set computation}, not for a complete topological classification of the entire family. In particular, its role is to reduce a global parametric problem to finitely many exact local checks.

From the computational-geometry viewpoint, this is the relevant level of exactness. In many geometric algorithms, one does not need a full symbolic description of every realization of the family. What is needed instead is a reliable method for identifying the parameter values at which a combinatorial or topological event can occur, and for certifying that the structure is stable between those values. Our framework is tailored precisely to this purpose.

\paragraph{High-level strategy.}
A full CAD would attempt to decompose a much larger ambient semi-algebraic space so that all relevant sign conditions are controlled simultaneously. In contrast, we work only in parameter space. We first derive a semi-algebraic candidate transition set
\[
\Sigma=\Sigma_{\mathrm{affine}}\cup\Sigma_{\infty},
\]
where $\Sigma_{\mathrm{affine}}$ detects possible affine singular events and $\Sigma_{\infty}$ detects possible degenerations at infinity. We then study the connected components of $U\setminus\Sigma$. For each such component, we choose one exact rational witness point and analyze the corresponding instantiated curve over $\mathbb{Q}$. If the instantiated topology is certified and no transition wall is crossed, then the topology is constant throughout that chamber.

The advantage of this organization is that the expensive symbolic work is concentrated in identifying the transition set, while the chamberwise verification stage is reduced to finitely many exact instances. This is why the framework is naturally compatible with the Milnor--Thom bound from Section~\ref{sec:prelim}: once $\Sigma$ has been expressed by finitely many polynomial conditions, the number of connected chambers of $U\setminus\Sigma$ is controlled in terms of the parameter dimension.

\subparagraph{Step 1: Compute candidate transition walls.}
We first determine the parameter values at which the topology may change. In the affine part, these arise from Jacobian rank-drop conditions, which detect possible singularities of the instantiated curve in affine space. At infinity, they arise from degeneration of the asymptotic direction structure, detected through the tangent cone of the local projective model at a point at infinity. The union of these conditions defines the candidate transition set
\[
\Sigma=\Sigma_{\mathrm{affine}}\cup\Sigma_{\infty}.
\]
Its complement partitions the parameter space into chambers. By construction, any genuine topological transition must occur on $\Sigma$.

At this stage, the framework is intentionally conservative: $\Sigma$ is first computed as a \emph{candidate} transition set. Some of its components may correspond to genuine topological changes, while others may turn out to be only algebraic over-approximations. Distinguishing between these two possibilities is part of the subsequent chamberwise and boundary analysis.

\subparagraph{Step 2: Pick one exact representative per chamber.}
Instead of decomposing the full ambient space, we work only in parameter space. For each connected chamber of $U\setminus\Sigma$, we choose one rational witness point
\[
\lambda_j \in U\setminus\Sigma.
\]
This turns the original parametric problem into a finite list of instantiated exact algebraic systems over $\mathbb{Q}$.

The use of rational witnesses serves two purposes. First, it keeps all subsequent symbolic computations exact. Second, it converts a family-level statement into a finite set of concrete certification problems. In particular, we never need to classify the entire family at once. We only need to understand one exact representative from each stable parameter region.

\subparagraph{Step 3: Run exact local topology checks.}
For each witness point, we perform exact symbolic checks on the corresponding instantiated curve. In the affine part, we certify the absence of singularities by exact ideal-membership computations. At infinity, we pass to the projective closure and analyze the local direction structure. When needed, we refine the tangent-cone information by introducing slope coordinates, which separate the possible asymptotic directions and allow us to determine the actual local real branch pattern.

The role of this step is not merely to verify smoothness. It is to certify that the topological model realized by the witness point is stable throughout the chamber containing it. Thus, once one witness per chamber has been analyzed, the topology of the family is certified away from the candidate transition walls.

\begin{algorithm}[tb]
\caption{Exact Certification Pipeline}
\label{alg:verification}
\renewcommand{\algorithmicrequire}{\textbf{Input:}}
\renewcommand{\algorithmicensure}{\textbf{Output:}}
\begin{algorithmic}[1]
\Require A parameterized family of algebraic curves over a parameter domain $U\subset\mathbb{R}^m$.
\Ensure A candidate transition set $\Sigma$ and an exact certification result for one witness point from each chamber of $U\setminus\Sigma$.

\State Compute affine candidate walls from Jacobian rank-drop conditions.
\State Compute candidate walls at infinity from degeneration of the tangent-direction equation.
\State Set $\Sigma \gets \Sigma_{\mathrm{affine}} \cup \Sigma_{\infty}$.
\For{each sampled chamber $C_j \subset U\setminus\Sigma$}
    \State Choose a rational witness point $\lambda_j \in C_j$.
    \State Instantiate the corresponding algebraic curve exactly over $\mathbb{Q}$.
    \State Certify affine smoothness by exact algebraic emptiness tests.
    \State Compute the projective closure and analyze the local model at infinity.
    \State If necessary, refine the infinity analysis using slope coordinates.
\EndFor
\State \Return The candidate transition set $\Sigma$ together with the witness-based certification data.
\end{algorithmic}
\end{algorithm}

\paragraph{Specialization to the canonical family.}
For the canonical line--line--circle family studied in this paper, the abstract framework above takes the following concrete form. The global symbolic stage identifies the parameter values where topological change may occur, while the local exact-certification stage verifies that the topology is constant on each connected component of the complement. Thus the family is analyzed not by a full ambient decomposition, but by a transition-set computation followed by witness-based certification.

More concretely, the pipeline specializes to four tasks:
\begin{enumerate}
    \item certify affine smoothness;
    \item isolate and analyze the unique point at infinity;
    \item derive the candidate transition walls from the discriminant of the tangent-direction equation;
    \item confirm by explicit local analysis that the boundary values correspond to actual topological transitions.
\end{enumerate}
The significance of the framework is therefore algorithmic rather than universal. It provides a structured exact method for computing transition sets for mixed-object configurations whose trisectors lie beyond the quadric regime, while avoiding the full cost of CAD.

\section{Canonical Line--Line--Circle Family}\label{sec:algebra}

We now introduce the canonical line--line--circle family that will serve as the main geometric object of the paper. Our goal is not to treat the full moduli space of all line--line--circle configurations in $\mathbb{R}^3$, but rather to isolate a bounded-dimensional subfamily that already captures the first genuinely non-quadric mixed-site behavior.

\paragraph{Geometric normalization.}
Let $L_1$ and $L_2$ be two skew lines, and let $C$ be a circle. Since we are interested in a canonical family up to Euclidean normalization, we choose coordinates $(x,y,z)$ so that
\begin{itemize}
    \item $L_1$ is the $x$-axis,
    \item $L_2$ lies in the plane $z=1$,
    \item the $z$-axis is the common perpendicular of $L_1$ and $L_2$,
    \item the plane containing $C$ is parallel to the $xy$-plane.
\end{itemize}
Thus the remaining geometric freedom is encoded by the crossing angle of the two lines, the height of the circle plane, and the circle radius.

Write $\alpha$ for the crossing angle between $L_1$ and $L_2$, and set
\[
c=\cos\alpha,\qquad s=\sin\alpha.
\]
We assume that the circle $C$ is centered at $(0,0,k)$ and has radius $R>0$. With this normalization, the relative configuration of the sites is fixed up to the parameters $(\alpha, k, R)$, and the mixed-object trisector becomes an explicitly parameterized algebraic family.

\paragraph{The two defining bisectors.}
In these coordinates, the two relevant bisectors are the bisector of the skew lines $L_1$ and $L_2$, and the bisector of the line $L_1$ and the circle $C$.

The line--line bisector is the quadratic surface
\[
F_1(x,y,z)=y^2-(xs-yc)^2+2z-1=0.
\]
This is the familiar quadric-type structure already present in the classical three-line setting.

The line--circle bisector is the quartic surface
\[
F_2(x,y,z)=\bigl(x^2-2kz+R^2+k^2\bigr)^2-4R^2(x^2+y^2)=0.
\]
This equation defines a non-quadric surface. The quartic term reflects the fact that the distance to the circle involves a radial square root before polynomialization, and this is precisely the source of the new algebraic complexity in the mixed-object setting.

The trisector of the family is therefore the algebraic space curve
\[
T_{k,R,\alpha}=\{(x,y,z)\in\mathbb{R}^3:\ F_1(x,y,z)=0,\ F_2(x,y,z)=0\}.
\]
Equivalently, the trisector is the intersection of one quadratic and one quartic bisector surface. As the intersection of a quadric and a quartic, the trisector has degree at most 8 by Bézout; later we prove that degree 8 is indeed realized in the admissible region.

The defining equations are obtained directly from the Euclidean distance formulas.
For the line--line bisector, let $P=(x,y,z)$. Since $L_1$ is the $x$-axis, its squared distance to $L_1$ is
\[
d_1^2=y^2+z^2.
\]
The line $L_2$ passes through $(0,0,1)$ and has direction vector $(c,s,0)$, so the squared distance from $P$ to $L_2$ is
\[
d_2^2=x^2+y^2+(z-1)^2-(xc+ys)^2.
\]
The bisector condition $d_1^2=d_2^2$ gives
\[
y^2+z^2=x^2+y^2+(z-1)^2-(xc+ys)^2,
\]
hence
\[
2z-1=x^2-(xc+ys)^2.
\]
Using the planar identity
\[
x^2+y^2=(xc+ys)^2+(xs-yc)^2,
\]
we obtain
\[
x^2-(xc+ys)^2=(xs-yc)^2-y^2,
\]
and therefore
\[
F_1(x,y,z)=y^2-(xs-yc)^2+2z-1=0.
\]

For the line--circle bisector, the squared distance from $P$ to $L_1$ is again $y^2+z^2$. The circle $C$ lies in the plane $z=k$, has center $(0,0,k)$, and radius $R>0$. The distance from $P$ to $C$ decomposes orthogonally into the vertical distance $|z-k|$ and the in-plane distance from $(x,y,k)$ to the circle, namely $\bigl|\sqrt{x^2+y^2}-R\bigr|$. Hence
\[
d_C^2=\bigl(\sqrt{x^2+y^2}-R\bigr)^2+(z-k)^2.
\]
The bisector condition $y^2+z^2=d_C^2$ becomes
\[
y^2+z^2=\bigl(\sqrt{x^2+y^2}-R\bigr)^2+(z-k)^2,
\]
which simplifies to
\[
2R\sqrt{x^2+y^2}=x^2-2kz+R^2+k^2.
\]
Squaring both sides yields
\[
4R^2(x^2+y^2)=\bigl(x^2-2kz+R^2+k^2\bigr)^2,
\]
so the line--circle bisector is
\[
F_2(x,y,z)=\bigl(x^2-2kz+R^2+k^2\bigr)^2-4R^2(x^2+y^2)=0.
\]
Thus the trisector is the algebraic space curve defined by the system
\[
F_1(x,y,z)=0,\qquad F_2(x,y,z)=0.
\]

\paragraph{Rational parameterization.}
To keep the symbolic computations entirely rational, we eliminate trigonometric functions by the Weierstrass substitution
\[
t=\tan(\alpha/2),\qquad
c=\frac{1-t^2}{1+t^2},\qquad
s=\frac{2t}{1+t^2}.
\]
After this substitution, the family is parameterized by
\[
(k,R,t)\in\mathbb{R}^3.
\]
This rationalization is convenient for two reasons. First, it allows all symbolic elimination and ideal-theoretic computations to be carried out over $\mathbb{Q}$. Second, it makes the chamber structure in parameter space explicit, which is essential for the transition-set viewpoint developed later.

\paragraph{Admissible parameter region.}
Throughout the paper we impose the general-position conditions
\[
R>0,\qquad t\neq 0,\qquad k\neq 0,\qquad k\neq 1.
\]
These assumptions remove boundary configurations that either collapse the line geometry or coincide with the transition values that will later be shown to govern the topology change of the family.

The condition $R>0$ simply excludes the degenerate case in which the circle collapses to a point. The condition $t\neq 0$ is equivalent to $s\neq 0$, and therefore excludes the case in which the two lines lose their genuine skew-line interaction in the chosen normalized family. The conditions $k\neq 0$ and $k\neq 1$ remove the two special height values of the circle plane that will later appear as the exact transition walls of the canonical family. Thus the admissible region is the open parameter domain on which we study the stable, nondegenerate members of the family.

\paragraph{Why this family is the right first test case.}
This family is deliberately chosen as the first exact mixed-object testbed beyond the classical line-only regime. It is still bounded-dimensional and highly symmetric, yet it already lies strictly beyond the quadric-only setting: one bisector remains quadratic, while the other is quartic. Consequently, the pencil-of-quadrics framework available for three skew lines no longer applies directly.

At the same time, the family is not artificially simplified to the point of triviality. Its trisector is already a nontrivial algebraic space curve, and its behavior at infinity is subtle enough to require a separate directional analysis. In this sense, the line--line--circle family serves as a pivotal methodological bridge for the present paper: it provides the first natural testbed to demonstrate that exact symbolic certification can be extended beyond the classical quadric bisector regime into non-quadric algebraic settings.

For some of the exact singularity computations later in the paper, we will use an equivalent lifted polynomial formulation that replaces the non-polynomial distance-to-circle relation by an incidence system with auxiliary variables. Geometrically, however, the family is completely represented by the two bisector equations above, and all topological statements concern the resulting trisector in $\mathbb{R}^3$.

\section{Affine Smoothness Test}\label{sec:affine}

The first exact certification step is to determine whether the canonical line--line--circle trisector can develop singularities in affine space. In the transition-set framework, this corresponds to the affine part of the candidate transition set, denoted by $\Sigma_{\mathrm{affine}}$. The conclusion of this section is that no such affine event occurs in the admissible parameter region: the canonical family is affine-smooth, so any topological transition must arise from the behavior at infinity.

\begin{theorem}\label{thm:affine-smooth}
For the canonical symmetric line--line--circle family under the assumptions
\[
R>0,\qquad t\neq 0,\qquad k\neq 0,\qquad k\neq 1,
\]
the degree-eight trisector is smooth in the real affine space $\mathbb{R}^3$. Equivalently,
\[
\Sigma_{\mathrm{affine}}=\emptyset.
\]
\end{theorem}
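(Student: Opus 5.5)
The approach is to encode the singular locus as an exact polynomial system and certify emptiness with Proposition~\ref{prop:nullstellensatz}, carrying the general-position conditions as invertibility constraints through auxiliary variables. A point $P=(x,y,z)$ of $T_{k,R,t}=V(F_1,F_2)$ is singular exactly when the gradients $\nabla F_1(P)$ and $\nabla F_2(P)$ are linearly dependent. That is, all three $2\times 2$ minors
\[
M_{xy},\ M_{xz},\ M_{yz}
\]
of the Jacobian of $(F_1,F_2)$ vanish. After the Weierstrass substitution we clear denominators by multiplying $F_1$ by $(1+t^2)^2$; this is harmless since $1+t^2\neq 0$ over $\mathbb{R}$, though over $\mathbb{C}$ we adjoin its inverse as well.

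We then form the ideal
\[
J=\bigl\langle F_1,F_2,M_{xy},M_{xz},M_{yz},\ 1-u_1R,\ 1-u_2t,\ 1-u_3k,\ 1-u_4(k-1),\ 1-u_5(1+t^2)\bigr\rangle
\]
in $\mathbb{Q}[x,y,z,k,R,t,u_1,\dots,u_5]$, treating the parameters as variables. The goal is to show $1\in J$, which kills all complex, and hence all real, singular points for every admissible parameter value simultaneously. The condition $R>0$ is used only through $R\neq 0$.

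If a direct Gröbner basis computation of $J$ is too heavy, we proceed in two stages:
\begin{enumerate}
\item Compute the elimination ideal $J_0=(\langle F_1,F_2,M_{xy},M_{xz},M_{yz}\rangle)\cap\mathbb{Q}[k,R,t]$.
\item Check that the radical of $J_0$ contains a product of the form $R^a t^b k^c (k-1)^d(1+t^2)^e$.
\end{enumerate}
This would identify exactly which walls come from affine singularities and show that they lie outside the admissible region.

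To make the elimination tractable, we exploit structure. $F_1$ is linear in $z$, so on the trisector
\[
z=\tfrac12\bigl(1-y^2+(xs-yc)^2\bigr).
\]
The surface $F_1=0$ is therefore a smooth graph. The trisector is then singular at $P$ exactly when the restriction $G(x,y)=F_2(x,y,z(x,y))$ has a singular point, that is, $G=G_x=G_y=0$. This reduces the problem to a plane curve of degree $8$ in $(x,y)$ and three equations in two unknowns, which is far cheaper.

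A useful secondary device is the lifted form $F_2=A^2-4R^2(x^2+y^2)$ with $A=x^2-2kz+R^2+k^2$. On the real locus, the points where $x^2+y^2=0$ (that is, $x=y=0$) force $A=0$. These points are exactly where the squaring step can create singularities: there both $F_2$ and $\nabla F_2$ vanish. So we will separately check that $x=y=0$, $A=0$, $F_1=0$ has no solution. These equations give $z=\tfrac12$ and $R^2+k^2-k=0$, which is impossible for real $k$ when $R>0$. This real argument is exactly where $R>0$ (not merely $R\neq0$) may be needed.

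If so, the complex Nullstellensatz certificate cannot succeed on its own. We then split the proof:
\begin{enumerate}
\item Run the ideal computation with the extra inverse $1-u_6(x^2+y^2)$ adjoined.
\item Handle the locus $x^2+y^2=0$ by the real argument above, adding the equations $x=y=0$ directly.
\end{enumerate}

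The main obstacle is the size of the elimination: a degree-$8$ plane curve whose coefficients are polynomials in three parameters. I expect the discriminant-type polynomial $J_0$ to be large and possibly to contain factors other than $R,t,k,k-1$. If such a factor appears, we must show it has no real zeros with $R>0$, for example by a sum-of-squares argument or an SMT check. Failing that, the admissible region would have to be refined.
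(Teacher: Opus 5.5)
Your plan breaks at the step you yourself flagged as delicate. The system $x=y=0$, $A=0$, $F_1=0$ reduces to $z=\tfrac12$ and $R^2=k(1-k)$. This \emph{does} have real solutions with $R>0$: for every $k\in(0,1)$ take $R=\sqrt{k(1-k)}$, for instance $k=R=\tfrac12$ with any $t\neq0$. At $P_0=(0,0,\tfrac12)$ you then get $F_1(P_0)=0$ and $A(P_0)=0$. Hence $F_2(P_0)=0$ and $\nabla F_2(P_0)=2A\nabla A-8R^2(x,y,0)=0$, so the Jacobian of $(F_1,F_2)$ has rank $1$. Your own graph reduction shows the local picture. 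On $F_1=0$ one has $A=(R^2+k^2-k)+E(x,y)$ with $E(x,y)=(1-ks^2)x^2+2ksc\,xy+ks^2y^2$, so when $R^2=k(1-k)$,
\[
G(x,y)=E(x,y)^2-4R^2(x^2+y^2).
\]
Its lowest-order part is negative definite, so $P_0$ is an isolated real point (an acnode) of the trisector. Geometrically, $R^2=k(1-k)$ says that $C$ lies on the sphere of radius $\tfrac12$ about $(0,0,\tfrac12)$, which is tangent to both lines. So $P_0$ is a genuine point equidistant from all three sites. The statement is therefore false on the hypersurface $R^2=k(1-k)$ inside the chamber $0<k<1$, and no certificate of the kind you aim for can exist. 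Indeed $1\notin J$, and $\sqrt{J_0}$ cannot contain any product $R^at^bk^c(k-1)^d(1+t^2)^e$, because $V(J_0)\supseteq\{R^2+k^2-k=0\}$. The distinction between $R>0$ and $R\neq0$ plays no role here, since only $R^2$ enters.

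The rest of your plan is sound, and carried out honestly it yields the correct statement. Off the axis, i.e.\ for $\rho=\sqrt{x^2+y^2}>0$, exactly one factor of $G=(A-2R\rho)(A+2R\rho)$ vanishes. A singular point then forces $(x,y)$ to be an eigenvector of the matrix of $E$ (trace $1$, determinant $ks^2(1-k)$) with eigenvalue $R^2/(R^2+k^2-k)$. This is possible if and only if
\[
k(1-k)\bigl(R^2+s^2(R^2+k^2-k)^2\bigr)=0,
\]
which is the paper's $\Delta_{\mathrm{affine}}$ at $h=1$ up to the factor $R^2$. The bracket has complex zeros with $R,t,k,k-1\neq0$, so even off the axis the goal $1\in J$ fails, and you need your real fallback (the bracket is $\ge R^2>0$). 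Thus the real affine singular locus is empty precisely when $R^2\neq k(1-k)$, and $\{R^2=k(1-k)\}$ belongs to $\Sigma_{\mathrm{affine}}$.

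The paper's lifted proof misses $P_0$ for a structural reason. In its incidence model the fibre over $P_0$ is the whole circle $u^2+v^2=R^2$, and along it the lifted Jacobian still has rank $4$ (its $4\times4$ minors include $\pm8R^2u$ and $\pm8R^2v$). So smoothness of the lifted variety does not descend through the projection to $\mathbb{R}^3$. Working directly with $V(F_1,F_2)$, as you do, is the right way to see this. What is missing is the additional hypothesis $R^2\neq k(1-k)$.
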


\begin{proof}
The main difficulty is that the distance from a point in space to the circle is not polynomial in the ambient coordinates, because it involves the radial term $\sqrt{x^2+y^2}$. To perform an exact singularity test, we therefore replace the non-polynomial line--circle bisector relation by an equivalent lifted polynomial incidence system. This allows the singularity question to be expressed purely in terms of Jacobian rank conditions.

\paragraph{A lifted polynomial model.}
It is convenient to carry out the symbolic elimination in a slightly more general normalized family, in which the plane containing the second line is written as $z=h$ rather than immediately fixing $h=1$. In this formulation the line--line bisector becomes
\[
F_1 = y^2-(xs-yc)^2+2zh-h^2.
\]
We will set $h=1$ at the end of the argument to recover the canonical family introduced in Section~\ref{sec:algebra}.

Let $P=(x,y,z)$ be a point on the trisector, and let $(u,v,k)$ denote the point on the circle plane selected by the nearest-point condition for the circle constraint. The auxiliary variables $(u,v)$ are constrained to lie on the circle of radius $R$, so
\[
F_3=u^2+v^2-R^2=0.
\]
The equality of squared distances to $L_1$ and to the candidate point $(u,v,k)$ on the circle plane is encoded by
\[
F_2=(x-u)^2+(y-v)^2+(z-k)^2-(y^2+z^2)=0.
\]
To characterize the nearest point on the circle, we impose the orthogonality condition
\[
F_4=yu-xv=0,
\]
which says that the displacement in the circle plane is orthogonal to the tangent direction $(-v,u)$ of the circle. Finally, when treating $c=\cos\alpha$ and $s=\sin\alpha$ as algebraic parameters during elimination, we include the trigonometric identity
\[
F_5=c^2+s^2-1=0.
\]

Thus we work with the polynomial system
\[
I=\langle F_1,F_2,F_3,F_4,F_5\rangle
\subset
\mathbb{Q}[h,k,R,c,s,x,y,z,u,v].
\]

Geometrically, the roles of these equations are as follows:
\begin{itemize}
    \item $F_1=0$ is the line--line bisector.
    \item $F_2=0$ expresses equality of squared distances to $L_1$ and the circle constraint in lifted form.
    \item $F_3=0$ constrains $(u,v)$ to lie on the circle of radius $R$.
    \item $F_4=0$ enforces the nearest-point relation on the circle.
    \item $F_5=0$ records the algebraic relation between $c$ and $s$.
\end{itemize}

\paragraph{Jacobian rank-drop criterion.}
A singular affine point of the lifted variety occurs when the defining constraints fail to meet transversely. Since the geometric variables are $(x,y,z,u,v)$, we consider the Jacobian matrix
\[
J=\frac{\partial(F_1,F_2,F_3,F_4)}{\partial(x,y,z,u,v)},
\]
that is, the Jacobian with respect to the spatial and auxiliary variables only. A point of the lifted incidence variety is singular precisely when all maximal minors of $J$ vanish. Let $I_{\mathrm{sing}}$ denote the ideal obtained by adjoining these maximal minors to $I$.

To determine for which parameter values affine singularities may occur, we eliminate the geometric variables $(x,y,z,u,v)$ from $I_{\mathrm{sing}}$ and project the rank-drop condition to parameter space. This produces an elimination polynomial whose vanishing is necessary for the existence of an affine singular point.

\paragraph{The elimination polynomial.}
The exact elimination computation yields
\[
\Delta_{\mathrm{affine}}
=
hk\cdot R^2(h-k)\Bigl[h^2R^2+s^2(hk-k^2-R^2)^2\Bigr].
\]
This polynomial records all parameter values at which the lifted system can fail to be affine-smooth.

Now specialize to the canonical family by setting $h=1$. Then
\[
\Delta_{\mathrm{affine}}
=
k\,R^2(1-k)\Bigl[R^2+s^2(k-k^2-R^2)^2\Bigr].
\]
The factors $k$ and $(1-k)$ correspond exactly to the two boundary values $k=0$ and $k=1$, which are excluded from the admissible parameter region and will later reappear as the actual transition walls of the family. Thus, inside the open region
\[
R>0,\qquad t\neq 0,\qquad k\neq 0,\qquad k\neq 1,
\]
the only potentially relevant factor is
\[
S(k,R,t)=R^2+s^2(k-k^2-R^2)^2.
\]

\paragraph{Positivity in the admissible region.}
The expression $S(k,R,t)$ is a sum of two nonnegative terms. Since $R>0$, we have
\[
S(k,R,t)\ge R^2>0
\]
throughout the admissible parameter region. Therefore the elimination polynomial has no real zero there. Hence the lifted system admits no affine singular point for any admissible parameter choice.

It follows that the affine singular locus of the canonical line--line--circle family is empty. In particular, the trisector is smooth in the real affine space $\mathbb{R}^3$, and therefore
\[
\Sigma_{\mathrm{affine}}=\emptyset.
\]
\end{proof}

Thus any topological transition in the canonical family must arise from the projective behavior at infinity rather than from the affine part of the curve.

\section{Candidate Walls and Witness Regions}\label{sec:infinity}

After ruling out affine singularities, the only possible source of topological change in the canonical line--line--circle family is the behavior of the projective closure at infinity. The purpose of this section is twofold. First, we identify the unique point at infinity and derive the algebraic condition under which its asymptotic direction structure degenerates. Second, we explain how this candidate transition set partitions parameter space into finitely many chambers, each of which can be represented by an exact rational witness.

\subsection{Projective closure and candidate walls at infinity}

Passing to projective space with homogeneous coordinates
\[
x=X/W,\qquad y=Y/W,\qquad z=Z/W,
\]
the two bisector equations become
\[
F_1^h(X,Y,Z,W)=Y^2-(Xs-Yc)^2+2ZW-W^2
\]
and
\[
F_2^h(X,Y,Z,W)=\bigl(X^2-2kZW+(R^2+k^2)W^2\bigr)^2-4R^2W^2(X^2+Y^2).
\]

We first determine where the projective closure meets the plane at infinity and which parameter values may alter the tangent-direction pattern there.

\begin{theorem}\label{thm:infinity-analysis}
Let $T$ be the projective closure of the canonical line--line--circle trisector. Then $T$ meets the plane at infinity in the unique point
\[
p_\infty=[0:0:1:0].
\]
In the affine chart $Z=1$, the first nonzero homogeneous term of a local defining equation at $p_\infty$ is
\[
E(X,Y)^2,
\]
where
\[
E(X,Y)=X^2-2kQ(X,Y),
\qquad
Q(X,Y)=\tfrac12(s^2X^2-2sc\,XY-s^2Y^2).
\]
The discriminant of the quadratic form $E$ is
\[
\Delta_Q=4ks^2(k-1).
\]
Hence the asymptotic direction pattern degenerates exactly when $\Delta_Q=0$, that is, at the boundary values $k=0$ and $k=1$.
\end{theorem}

\begin{proof}
We first determine the points at infinity. Intersecting the homogenized system with the hyperplane $W=0$ gives
\[
F_1^h(X,Y,Z,0)=Y^2-(Xs-Yc)^2,
\qquad
F_2^h(X,Y,Z,0)=X^4.
\]
Thus every point of $T\cap\{W=0\}$ must satisfy $X=0$. Substituting $X=0$ into the first equation yields
\[
Y^2-(-Yc)^2=(1-c^2)Y^2=s^2Y^2=0.
\]
Since $t\neq 0$, equivalently $s\neq 0$, it follows that $Y=0$ as well. Hence the projective closure meets the plane at infinity only in a point of the form
\[
[0:0:Z:0],
\]
which after projective normalization is exactly
\[
p_\infty=[0:0:1:0].
\]

We now analyze the local equation near $p_\infty$. Work in the affine chart $Z=1$, where $p_\infty$ is identified with the origin $(X,Y,W)=(0,0,0)$. By the Jacobian criterion, the homogenized system drops rank there, so $p_\infty$ is a singular point of the projective closure. On the other hand,
\[
F_1^h(X,Y,1,W)=Y^2-(Xs-Yc)^2+2W-W^2
\]
satisfies
\[
\frac{\partial F_1^h}{\partial W}(X,Y,1,W)=2-2W,
\qquad
\frac{\partial F_1^h}{\partial W}(0,0,0)=2\neq 0.
\]
Therefore the Implicit Function Theorem yields a unique local analytic solution
\[
W=\omega(X,Y)
\]
with $\omega(0,0)=0$.

To determine its leading term, observe that $\omega$ begins in degree two. Keeping only the quadratic part of the equation
\[
F_1^h(X,Y,1,\omega(X,Y))=0,
\]
the term $W^2$ contributes only in degree four and higher, so we obtain
\[
2W=(Xs-Yc)^2-Y^2+O(\|(X,Y)\|^4).
\]
Hence
\[
\omega(X,Y)=Q(X,Y)+O(\|(X,Y)\|^4),
\]
where
\[
Q(X,Y)=\tfrac12\bigl((Xs-Yc)^2-Y^2\bigr)
      =\tfrac12\bigl(s^2X^2-2sc\,XY-s^2Y^2\bigr).
\]

We next substitute this expansion into the second homogenized equation
\[
F_2^h(X,Y,1,W)
=
\bigl(X^2-2kW+(R^2+k^2)W^2\bigr)^2-4R^2W^2(X^2+Y^2).
\]
Since $W=O(\|(X,Y)\|^2)$, every term containing $W^2$ contributes only in higher order. Therefore the first nonzero homogeneous term comes from
\[
\bigl(X^2-2kW\bigr)^2
\]
with $W$ replaced by its quadratic part $Q(X,Y)$. Thus the initial form is
\[
\bigl(X^2-2kQ(X,Y)\bigr)^2=E(X,Y)^2,
\]
where
\[
E(X,Y)=X^2-2kQ(X,Y).
\]

Expanding,
\[
E(X,Y)=(1-ks^2)X^2+(2ksc)XY+(ks^2)Y^2.
\]
Its discriminant is
\begin{align*}
\Delta_Q
&=(2ksc)^2-4(1-ks^2)(ks^2)\\
&=4k^2s^2c^2-4ks^2+4k^2s^4\\
&=4k^2s^2(c^2+s^2)-4ks^2\\
&=4k^2s^2-4ks^2\\
&=4ks^2(k-1),
\end{align*}
using $c^2+s^2=1$.

If $\Delta_Q\neq 0$, then $E(X,Y)$ factors into two distinct linear forms, so the tangent cone $E(X,Y)^2=0$ consists of two distinct double directions. If $\Delta_Q=0$, those directions merge or degenerate. Thus $\Delta_Q=0$ is the algebraic condition for degeneration of the tangent-direction factorization at infinity.

Accordingly, the candidate asymptotic degeneration locus is
\[
\Sigma_\infty \subseteq \{k=0,\;k=1,\;s=0\}.
\]
Under the standing assumption $t\neq 0$, equivalently $s\neq 0$, this reduces to
\[
\Sigma_\infty \subseteq \{k=0,\;k=1\}.
\]
Hence, in the present family, the only candidate transition values contributed by infinity are $k=0$ and $k=1$.
\end{proof}

Theorem~\ref{thm:infinity-analysis} identifies the candidate transition set contributed by infinity. Together with Theorem~\ref{thm:affine-smooth}, it reduces the exact certification problem to the connected components of the complement of these candidate walls.

\subsection{Parameter chambers and exact representatives}

Since $\Sigma_{\mathrm{affine}}=\emptyset$ and the only candidate transition values at infinity are $k=0$ and $k=1$, the admissible parameter space is cut by these two walls together with the sign condition on $t$.

\begin{proposition}\label{prop:witness-cells}
Under the assumptions $R>0$ and $t\neq 0$, the boundary values $k=0$ and $k=1$, together with the sign of $t$, partition the parameter space into six open chambers. Each chamber contains a rational point.
\end{proposition}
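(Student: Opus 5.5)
The plan is to work in the parameter domain $U=\{(k,R,t)\in\mathbb{R}^3 : R>0,\ t\neq 0\}$ and remove the two planes $\{k=0\}$ and $\{k=1\}$. The complement $U\setminus(\{k=0\}\cup\{k=1\})$ is a product of three factors,
\[
\bigl((-\infty,0)\cup(0,1)\cup(1,\infty)\bigr)\times(0,\infty)\times\bigl((-\infty,0)\cup(0,\infty)\bigr),
\]
with the first factor in $k$, the second in $R$ and the third in $t$. A product of open intervals is convex, hence connected. So each choice of one interval in $k$ and one sign of $t$ gives a connected open set. There are $3\cdot 1\cdot 2=6$ such product boxes, and they are pairwise disjoint.

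The next step is to show that these boxes are exactly the connected components. Any path inside the complement has continuous coordinates. Its $k$-coordinate never takes the values $0$ or $1$, and its $t$-coordinate never takes the value $0$. By the intermediate value theorem, $k$ stays in one of the three $k$-intervals and $t$ keeps a fixed sign along the whole path. Hence two points in different boxes cannot be joined, and the six boxes are precisely the six open chambers.

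For rational representatives, I will take $k\in\{-1,\tfrac12,2\}$, $R=1$ and $t\in\{-1,1\}$. These six points lie in the six chambers and have rational coordinates. In the original angle coordinates they give $c=0$ and $s=\pm1$, so the exact computations stay over $\mathbb{Q}$. As a consistency check, I will note that $\Delta_Q=4ks^2(k-1)$ is positive on the chambers with $k<0$ or $k>1$ and negative on the chambers with $0<k<1$. This matches the real-direction versus complex-direction dichotomy used later in the paper.

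I see no real obstacle here. The only delicate point is making the claim that the components are exactly six precise, and the coordinate-wise intermediate value argument above handles that. One could also note that the count agrees with the Milnor--Thom bound (Theorem~\ref{thm:milnor-thom}), but the direct argument does not need it.
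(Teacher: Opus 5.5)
Your proposal is correct and follows essentially the same route as the paper, which splits by the sign of $t$ and then by $k<0$, $0<k<1$, $k>1$, and uses exactly your witnesses $(-1,1,\pm1)$, $(1/2,1,\pm1)$, $(2,1,\pm1)$. Your convexity and intermediate-value argument makes explicit why these six boxes are exactly the connected components. The paper leaves that step implicit.
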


\begin{proof}
The inequalities $t>0$ and $t<0$ split the admissible region into two open half-regions, and the two values $k=0$ and $k=1$ subdivide each of them into three open chambers corresponding to
\[
k<0,\qquad 0<k<1,\qquad k>1.
\]
Since all defining inequalities are strict, every chamber contains a rational interior point. For example, one may take
\[
(k,R,t)=(-1,1,\pm1),\qquad (1/2,1,\pm1),\qquad (2,1,\pm1).
\]
\end{proof}

These chambers are the regions on which we later perform witness-based exact certification. Before doing so, we record a simple symmetry that cuts the number of topologically distinct cases in half.

\begin{lemma}[The sign of $t$ does not change the real projective topology]
\label{lem:t-sign}
For every admissible parameter triple $(k,R,t)$ with $R>0$, $t\neq 0$, and $k\neq 0,1$, the trisectors corresponding to $(k,R,t)$ and $(k,R,-t)$ are projectively equivalent. In particular, the sign of $t$ does not affect the real projective topology of the trisector.
\end{lemma}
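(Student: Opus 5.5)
Changing $t\mapsto -t$ fixes $c=(1-t^2)/(1+t^2)$ and sends $s=2t/(1+t^2)$ to $-s$. So the plan is to find an explicit Euclidean (hence projective) linear change of coordinates that carries the defining system at $(k,R,t)$ to the system at $(k,R,-t)$. The natural candidate is the reflection $\sigma(x,y,z)=(-x,y,z)$, or equivalently $(x,-y,z)$. It fixes the $x$-axis $L_1$ as a set. It maps the circle $C$, centered on the $z$-axis in the plane $z=k$, to itself. It sends $L_2$, with direction $(c,s,0)$ in the plane $z=1$, to the line with direction $(-c,s,0)\sim(c,-s,0)$, which is exactly $L_2$ for the parameter $-t$. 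Geometrically, then, $\sigma$ maps the site configuration for $t$ onto the one for $-t$, and the trisector, being defined by metric equalities, must be carried along with it.

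The verification will be done algebraically on the two defining polynomials, so that it is independent of the derivation. We check directly that
\[
F_1(-x,y,z;\,c,s)=y^2-(-xs-yc)^2+2z-1=y^2-(x(-s)-yc)^2\cdot 1+2z-1=F_1(x,y,z;\,c,-s),
\]
since $(-xs-yc)^2=(xs+yc)^2=(x(-s)-yc)^2$. The quartic $F_2$ involves $x$ only through $x^2$ and does not involve $s$, so $F_2(-x,y,z)=F_2(x,y,z)$. Hence $\sigma$ maps $T_{k,R,t}$ bijectively onto $T_{k,R,-t}$.

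Because $\sigma$ is linear, it extends to the projective map $[X:Y:Z:W]\mapsto[-X:Y:Z:W]$ of $\mathbb{P}^3$. This map commutes with homogenization, so it transports the saturated ideal $(I^h:W^\infty)$ for $t$ onto the one for $-t$. It therefore identifies the two projective closures, fixes $p_\infty=[0:0:1:0]$, and as a real linear map restricts to a homeomorphism of real loci. The real projective topology is consequently the same for $t$ and $-t$.

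The only point needing care is that the parameter conventions are consistent. Under $t\mapsto -t$, the relation $c^2+s^2=1$ and the admissibility conditions $R>0$, $t\neq0$, $k\neq0,1$ are all preserved. The discriminant $\Delta_Q=4ks^2(k-1)$ is also invariant, which serves as a consistency check against Theorem~\ref{thm:infinity-analysis}. I do not expect any real obstacle beyond confirming the sign bookkeeping in $F_1$.
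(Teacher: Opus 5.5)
Your proposal is correct and is essentially the paper's proof: the reflection $(x,y,z)\mapsto(-x,y,z)$ carries $F_1$ at parameter $s$ to $F_1$ at $-s$ and fixes $F_2$ (which involves $x$ only through $x^2$), and it extends to the projective linear map $[X:Y:Z:W]\mapsto[-X:Y:Z:W]$. Your extra remarks are correct but not needed for the argument: the geometric action on the three sites, compatibility with saturation by $W$, and the observation that $(x,-y,z)$ works equally well.
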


\begin{proof}
Under the change $t\mapsto -t$, we have $c=\cos\alpha$ unchanged and $s=\sin\alpha$ replaced by $-s$. Let
\[
\rho(x,y,z)=(-x,y,z).
\]
Then
\[
F_1^{(-s)}(x,y,z)
=
y^2-(-xs-yc)^2+2z-1
=
y^2-((-x)s-yc)^2+2z-1
=
F_1^{(s)}(\rho(x,y,z)),
\]
while $F_2$ depends only on $x^2$ and $y^2$ and is therefore invariant under $\rho$. Hence $\rho$ maps the trisector for $(k,R,t)$ to the trisector for $(k,R,-t)$. After projective closure, this becomes the linear projective transformation
\[
[X:Y:Z:W]\longmapsto [-X:Y:Z:W].
\]
Therefore the two families are projectively equivalent.
\end{proof}

For the certification pipeline, we choose one rational witness in each chamber and perform the remaining symbolic checks on the corresponding instantiated system over $\mathbb{Q}$. By Lemma~\ref{lem:t-sign}, however, the chambers with $t>0$ and $t<0$ occur in projectively equivalent pairs. Hence, up to real projective topology, it suffices to analyze the three regimes
\[
k<0,\qquad 0<k<1,\qquad k>1.
\]
In this way, the parametric transition-set problem is reduced to finitely many exact local computations, one for each topological regime in parameter space. The next section carries out this refinement in slope coordinates, allowing us to recover the actual local real structure at infinity and to prove chamberwise topological constancy.

\section{Local Branch Structure at Infinity}\label{sec:local-infinity}

We now determine the actual local real structure near the unique point at infinity. By Theorem~\ref{thm:infinity-analysis}, the only candidate transition values are $k=0$ and $k=1$. By Lemma~\ref{lem:t-sign}, the chambers with $t>0$ and $t<0$ are projectively equivalent. Hence, up to real projective topology, it is enough to analyze one representative from each of the three regimes
\[
k<0,\qquad 0<k<1,\qquad k>1.
\]

The tangent-cone computation from Section~\ref{sec:infinity} identifies the possible asymptotic directions, but by itself it does not determine the realized real branch pattern. To recover that pattern, we refine the local model in slope coordinates. We then use explicit boundary factorizations to show that the candidate walls are actual topological transition values.

\subsection{Slope-chart analysis at infinity}

We work in the affine chart $Z=1$ centered at
\[
p_\infty=[0:0:1:0].
\]
From the proof of Theorem~\ref{thm:infinity-analysis}, the equation $F_1^h=0$ can be solved locally as
\[
W=\omega(X,Y)=Q(X,Y)+O(\|(X,Y)\|^4),
\]
where
\[
Q(X,Y)=\frac12\bigl(s^2X^2-2sc\,XY-s^2Y^2\bigr).
\]

We now introduce slope coordinates
\[
Y=uX.
\]
Then
\[
Q(X,uX)=X^2q(u),
\qquad
q(u)=\frac12\bigl(s^2-2sc\,u-s^2u^2\bigr),
\]
and the quadratic tangent-direction form becomes
\[
E(X,uX)=X^2g_k(u),
\qquad
g_k(u)=1-ks^2+2ksc\,u+ks^2u^2.
\]
Its discriminant is
\[
\Delta(g_k)=4ks^2(k-1)=\Delta_Q.
\]

Substituting
\[
W=\omega(X,uX)=X^2q(u)+O(X^4)
\]
into the second homogenized equation
\[
F_2^h(X,Y,1,W)=\bigl(X^2-2kW+(R^2+k^2)W^2\bigr)^2-4R^2W^2(X^2+Y^2),
\]
we obtain
\begin{align*}
F_2^h\bigl(X,uX,1,\omega(X,uX)\bigr)
&=
\Bigl(X^2-2kX^2q(u)+(R^2+k^2)X^4q(u)^2+O(X^6)\Bigr)^2 \\
&\quad -4R^2X^6q(u)^2(1+u^2)+O(X^8) \\
&=
X^4g_k(u)^2 \\
&\quad +X^6\Bigl(2(R^2+k^2)g_k(u)q(u)^2-4R^2(1+u^2)q(u)^2\Bigr)
+O(X^8).
\end{align*}
Dividing by $X^4$, we obtain the reduced local equation
\begin{equation}\label{eq:reduced-slope-main}
G(X,u)=g_k(u)^2+X^2H(X,u)=0,
\end{equation}
where
\[
H(X,u)=2(R^2+k^2)g_k(u)q(u)^2-4R^2(1+u^2)q(u)^2+O(X^2).
\]

The crucial sign comes from evaluating the coefficient of $X^2$ at a real root of $g_k$.

\begin{lemma}\label{lem:sign-H-main}
If $u_0$ is a real root of $g_k$, then $q(u_0)\neq 0$ and
\[
H(0,u_0)=-4R^2(1+u_0^2)q(u_0)^2<0.
\]
\end{lemma}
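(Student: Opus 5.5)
The plan is to evaluate the explicit formula for $H(0,u)$ at the root $u_0$, and to get $q(u_0)\neq 0$ from a structural identity linking $g_k$ and $q$.

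First I will record how $g_k$ and $q$ are related. Since $E(X,Y)=X^2-2kQ(X,Y)$, and the slope substitution gives $E(X,uX)=X^2g_k(u)$ and $Q(X,uX)=X^2q(u)$, dividing by $X^2$ yields
\[
g_k(u)=1-2k\,q(u)\qquad\text{for all }u.
\]
As a sanity check, expand $1-2k\cdot\tfrac12(s^2-2sc\,u-s^2u^2)=1-ks^2+2ksc\,u+ks^2u^2$; this agrees with the stated $g_k$. Now suppose $u_0$ is a real root of $g_k$ and, for contradiction, $q(u_0)=0$. The identity then gives $g_k(u_0)=1\neq 0$, which is impossible. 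Hence $q(u_0)\neq 0$. This argument needs no hypothesis on $k$, $R$ or $s$.

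Next I will evaluate $H$ at $X=0$. By its definition after \eqref{eq:reduced-slope-main}, the remainder $O(X^2)$ vanishes at $X=0$, so
\[
H(0,u)=2(R^2+k^2)g_k(u)q(u)^2-4R^2(1+u^2)q(u)^2 .
\]
Setting $u=u_0$ kills the first term because $g_k(u_0)=0$. This leaves $H(0,u_0)=-4R^2(1+u_0^2)q(u_0)^2$.

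Finally, for strict negativity, each factor is strictly positive: $R^2>0$ because $R>0$; $1+u_0^2\ge 1$ because $u_0$ is real; and $q(u_0)^2>0$ because $q(u_0)$ is real and nonzero by the first step. Therefore $H(0,u_0)<0$.

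The only step needing care is justifying that the $O(X^2)$ remainder in $H$ is a genuine analytic remainder, so that it vanishes at $X=0$. This comes from $\omega$ being analytic with expansion $X^2q(u)+O(X^4)$, as in the proof of Theorem~\ref{thm:infinity-analysis}. Beyond that, the lemma is immediate.
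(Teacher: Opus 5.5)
Your proof is correct and is essentially the paper's own argument: the identity $g_k=1-2kq$ rules out $q(u_0)=0$, and evaluating $H(0,\cdot)$ at the root kills the $g_k$-term, leaving $-4R^2(1+u_0^2)q(u_0)^2<0$. (Side remark: if you keep the $\tfrac12 Q^2$ term of $\omega=1-\sqrt{1-2Q}$, the first coefficient of $H(0,u)$ becomes $2(R^2+k^2-k)$ rather than $2(R^2+k^2)$; since it multiplies $g_k(u_0)=0$, your conclusion is unaffected.)
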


\begin{proof}
If $q(u_0)=0$, then
\[
g_k(u_0)=1-2kq(u_0)=1,
\]
contrary to the assumption that $u_0$ is a root of $g_k$. Hence $q(u_0)\neq 0$. Evaluating the coefficient of $X^2$ in \eqref{eq:reduced-slope-main} at $u=u_0$ gives
\[
H(0,u_0)
=
2(R^2+k^2)g_k(u_0)q(u_0)^2-4R^2(1+u_0^2)q(u_0)^2
=
-4R^2(1+u_0^2)q(u_0)^2<0.
\]
\end{proof}

The sign of $\Delta_Q$ now determines the number of real slope points, and Lemma~\ref{lem:sign-H-main} determines the local branch structure at each such point.

\begin{theorem}[Chamberwise local constancy at infinity]
\label{thm:local-infinity}
For the canonical symmetric line--line--circle family, the local real projective structure near $p_\infty$ is constant on each connected component of
\[
\{R>0,\ t\neq 0,\ k\neq 0,\ k\neq 1\}.
\]
More precisely:
\begin{enumerate}
    \item If $k<0$ or $k>1$, then $g_k$ has two distinct real roots $u_+$ and $u_-$, and near each point $(0,u_\pm)$ the reduced slope-chart curve has a transverse real $X$-shaped local crossing.
    \item If $0<k<1$, then $g_k$ has no real root, and no real branch approaches the exceptional line $X=0$ at finite slope.
\end{enumerate}
Consequently, the local topology at infinity is unchanged within each chamber of the complement of $\{k=0\}\cup\{k=1\}$.
\end{theorem}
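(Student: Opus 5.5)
The plan is to work entirely with the reduced slope-chart equation \eqref{eq:reduced-slope-main}, $G(X,u)=g_k(u)^2+X^2H(X,u)$, and split into cases according to the sign of $\Delta(g_k)=\Delta_Q=4ks^2(k-1)$.

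\emph{Counting real slopes.} Since $s\neq 0$ on the admissible region, $\Delta_Q$ has the sign of $k(k-1)$. It is positive for $k<0$ or $k>1$ and negative for $0<k<1$. The leading coefficient of $g_k$ is $ks^2\neq 0$, so $g_k$ is a genuine quadratic. Hence $g_k$ has two distinct real simple roots $u_\pm$ in the outer chambers and no real root in the middle chamber. Both roots depend continuously (indeed analytically) on $(k,R,t)$ inside each chamber, because $\Delta_Q$ does not vanish there.

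\emph{Middle chamber, $0<k<1$.} Here $g_k(u)^2$ is bounded below by a positive constant on every compact $u$-interval. Since $X^2H(X,u)\to 0$ as $X\to 0$ uniformly on such intervals, $G$ has no real zeros near $X=0$ at finite slope. So no real branch reaches the exceptional line. For completeness I would also check the slope at infinity, i.e.\ the complementary chart $X=vY$. At $v=0$ the leading form there is $E(0,1)^2=(ks^2)^2\neq 0$, so no branch approaches along $X=0$ either.

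\emph{Outer chambers, $k<0$ or $k>1$.} Set $u=u_0+w$ with $u_0=u_\pm$. Because $u_0$ is a simple root, $g_k(u)=g_k'(u_0)w+O(w^2)$ with $g_k'(u_0)\neq 0$. Thus $G=g_k'(u_0)^2w^2+H(0,u_0)X^2+(\text{higher order})$. By Lemma~\ref{lem:sign-H-main}, $H(0,u_0)<0$, so the quadratic part is $a^2w^2-b^2X^2$ with $a,b>0$. This is a nondegenerate indefinite form, i.e.\ a Morse $A_1$ point. By the Morse lemma (or Newton--Puiseux, factoring $G=(aw-bX+\dots)(aw+bX+\dots)$ via the implicit function theorem applied to $\sqrt{g_k^2}$-type factorization $g_k^2+X^2H=(g_k-X\sqrt{-H})(g_k+X\sqrt{-H})$ with $-H>0$ near $(0,u_0)$), the real zero set near $(0,u_0)$ consists of exactly two smooth real branches crossing transversally. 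These branches are $g_k(u)=\pm X\sqrt{-H(X,u)}$, which is the claimed $X$-shaped crossing. The factorization trick is clean and is what I would write down explicitly.

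\emph{Constancy.} The remaining step is to argue that this local picture is constant on each chamber. The chambers are connected, and on each one the number of real roots of $g_k$ is fixed. Along paths in a chamber, the nondegeneracy data $g_k'(u_0)\neq 0$ and $H(0,u_0)<0$ persist, so the Morse type of each crossing cannot change. The main obstacle I expect is rigor about the $O(X^2)$ remainder in $H$ and the blow-down. One point is that $\omega$ is only known to order $4$, so I must confirm that $H$ is analytic in $(X,u)$ with the stated value at $X=0$; this follows since $G$ is obtained by substituting the analytic $\omega$ and dividing by $X^4$ exactly. The other point is that branches in the slope chart map back to branches through $p_\infty$ on the curve $W=\omega$. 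For the uniform statement across a chamber I would use the parametric version of the implicit function theorem for $g_k\pm X\sqrt{-H}$, continuous in $(k,R,t)$.
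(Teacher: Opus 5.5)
Your proposal is correct and follows the paper's proof. It uses the same case split on the sign of $\Delta_Q=4ks^2(k-1)$, the same appeal to Lemma~\ref{lem:sign-H-main} to obtain the indefinite quadratic part $g_k'(u_0)^2w^2+H(0,u_0)X^2$ in the outer chambers, and the same domination of $X^2H$ by $g_k^2>0$ in the middle chamber; your factorization $g_k^2+X^2H=(g_k-X\sqrt{-H})(g_k+X\sqrt{-H})$ with the implicit function theorem is a cleaner justification of the transverse crossing than the paper's passage from $(u-u_\pm)^2=\lambda_\pm X^2+O(X^3)$, and your check in the chart $X=vY$ (where $E(0,1)=ks^2\neq 0$) covers the vertical direction that the paper's finite-slope argument leaves aside.
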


\begin{proof}
Assume first that $k<0$ or $k>1$. Then
\[
\Delta_Q=4ks^2(k-1)>0,
\]
so the quadratic polynomial $g_k$ has two distinct real roots $u_+$ and $u_-$. Since these roots are simple, we have
\[
g_k(u)=g_k'(u_\pm)(u-u_\pm)+O((u-u_\pm)^2)
\]
near each root. Substituting this expansion into \eqref{eq:reduced-slope-main} and using Lemma~\ref{lem:sign-H-main}, we obtain
\[
g_k'(u_\pm)^2(u-u_\pm)^2
=
4R^2(1+u_\pm^2)q(u_\pm)^2X^2+O(X^3).
\]
Hence
\[
(u-u_\pm)^2=\lambda_\pm X^2+O(X^3),
\qquad
\lambda_\pm=
\frac{4R^2(1+u_\pm^2)q(u_\pm)^2}{g_k'(u_\pm)^2}>0.
\]
This is precisely the local equation of a transverse real $X$-shaped crossing in the $(X,u)$-plane.

Now assume that $0<k<1$. Then
\[
\Delta_Q<0,
\]
so $g_k$ has no real root. Since the leading coefficient of $g_k$ is $ks^2>0$, the quadratic polynomial $g_k$ has positive minimum on $\mathbb{R}$; in particular,
\[
g_k(u)^2\ge m>0
\]
for some constant $m$. Therefore, for sufficiently small $|X|$, the perturbation term $X^2H(X,u)$ cannot cancel the strictly positive leading term in \eqref{eq:reduced-slope-main} at any finite real slope $u$. Thus the reduced curve has no real point over the exceptional line $X=0$, and no real branch approaches $p_\infty$ at finite slope.

In both cases, the local branch pattern is determined entirely by the chamber containing the parameter value. Therefore the local topology at infinity remains constant on each connected component of the complement of $\{k=0\}\cup\{k=1\}$.
\end{proof}

Theorem~\ref{thm:local-infinity} shows that the values $k=0$ and $k=1$ are the only places where the local branch structure at infinity can change. To conclude the transition-set computation, it remains to show that these candidate walls are actual topological transition values.

\subsection{The degenerate boundary values $k=1$ and $k=0$}

We now examine the two boundary values at which the tangent-cone discriminant vanishes. The point is not merely that the tangent directions degenerate there, but that the trisector itself undergoes an explicit algebraic transition in affine space.

\subsubsection{The boundary value $k=1$}

Geometrically, the condition $k=1$ means that the circle $C$ lies in the same distinguished plane as the second line $L_2$, namely the plane $z=1$. In this regime, the line--circle bisector becomes reducible.

\begin{theorem}\label{thm:degenerate-k1}
At the boundary value $k=1$, the degree-eight trisector becomes reducible and splits into two algebraic space curves of degree four. These two quartic components intersect in exactly two real affine points:
\[
\left(Rc,Rs,\frac{1-R^2s^2}{2}\right)
\qquad\text{and}\qquad
\left(-Rc,-Rs,\frac{1-R^2s^2}{2}\right).
\]
Moreover, each of these two affine intersection points is an ordinary real node of the boundary trisector.
\end{theorem}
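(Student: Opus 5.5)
The plan is to exploit the special role of $k=1$: at this value the line--line bisector can be substituted directly into the line--circle bisector, and the resulting quartic should factor as a difference of squares.

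\emph{Reduction to the $(x,y)$-plane.} I would first use the rotated planar coordinates
\[
a=xc+ys,\qquad b=xs-yc,
\]
so that $x^2+y^2=a^2+b^2$ and $(a,b)$ is an orthonormal rotation of $(x,y)$. The equation $F_1=0$ reads $2z=1+b^2-y^2$. With $k=1$ this gives
\[
x^2-2z+R^2+1=x^2+y^2-b^2+R^2=a^2+R^2 .
\]
Hence, on the quadric $F_1=0$,
\[
F_2=(a^2+R^2)^2-4R^2(a^2+b^2)=(a^2-R^2)^2-(2Rb)^2=(a^2-2Rb-R^2)(a^2+2Rb-R^2).
\]
So the trisector is the union of $T_\pm=\{F_1=0\}\cap\{a^2\mp 2Rb-R^2=0\}$.

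\emph{Degree of the components.} Each $T_\pm$ is the intersection of two quadrics, so each has degree at most $4$ by Bézout. The degrees must add to $8$, so each has degree exactly $4$. To see that the intersection is proper, note that $F_1$ is a graph $z=\tfrac12(1+b^2-y^2)$ over the $(x,y)$-plane. The second factor is a parabolic cylinder over the parabola $P_\pm:\ b=\pm(a^2-R^2)/(2R)$, and it does not depend on $z$. Therefore $T_\pm$ is exactly the graph of that function over $P_\pm$. In particular, projection to the $(x,y)$-plane restricts to a real-analytic diffeomorphism from $T_\pm$ onto $P_\pm$, which also shows that each component is smooth.

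\emph{Intersection points and the node property.} The points of $T_+\cap T_-$ project to $P_+\cap P_-$, which forces $b=0$ and $a=\pm R$. Since $b=0$ means that $(x,y)$ is parallel to $(c,s)$, these points are $(x,y)=\pm R(c,s)$. Then $y^2=R^2s^2$ gives $z=(1-R^2s^2)/2$, which recovers exactly the two stated points; they are real and affine.

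For the node property, I would argue through the same graph map. The branches of the trisector at each point are the lifts of the two parabolas, and each parabola is smooth there. Their slopes $db/da=\pm a/R$ equal $\pm1$ at $a=\pm R$, so the parabolas are distinct and cross transversally. Because the graph map over the $(x,y)$-plane is a diffeomorphism onto $\{F_1=0\}$, transversality lifts to the space curve. Each point is therefore the union of two smooth branches with distinct tangents, that is, an ordinary real node.

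\emph{Expected obstacle.} The main potential obstacle is making ``ordinary node'' rigorous for a space curve rather than a plane curve. The graph-over-$F_1$ argument handles this: it turns the question into one about transversal crossing of two smooth plane conics. I do not expect the difference-of-squares factorization itself to cause difficulty once the substitution of $F_1$ into $F_2$ is made.
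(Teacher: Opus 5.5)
Your proof is correct, and its algebraic core matches the paper's. The paper also eliminates $z$ via $F_1=0$ and writes the result in the rotated quantities $xc+ys$ and $xs-yc$. It factors this as a difference of squares into $P_1P_2$ (your $a^2\mp 2Rb-R^2$), and finds the two meeting points by subtracting and adding $P_1=0$ and $P_2=0$.

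You differ on the node property. The paper certifies it pointwise: it evaluates $\nabla F_1$, $\nabla P_1$, $\nabla P_2$ at $p_\pm$ and checks that the $3\times 3$ determinant is $-16R^2\neq 0$. This gives smoothness of both components and distinct tangent lines at once. You instead use that $\{F_1=0\}$ is the graph of $z=\tfrac12(1+b^2-y^2)$. Each component is then the lift of a plane parabola, and transversality reduces to the slopes $\pm a/R=\pm 1$ at $a=\pm R$.

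Your route gives more global information. The graph map is a polynomial isomorphism of the plane onto $\{F_1=0\}$, so each quartic component is isomorphic to an affine parabola. Hence it is smooth at every affine point, not just at $p_\pm$, and the intersection count can be read off entirely in the plane.

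You also state the factorization correctly as holding only on $F_1=0$. The paper's sentence that the surface $F_2=0$ itself splits into two cylinders is accurate at $k=0$ but not at $k=1$. There $F_2$ is irreducible, and only the projected equation $P(x,y)$ factors.

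The paper's determinant test, in exchange, does not rely on $F_1$ being linear in $z$. It is therefore the more portable certificate for the symbolic pipeline the paper advocates.

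Your degree bookkeeping ("at most $4$ each, total $8$") is valid but roundabout. Properness of each intersection $\{F_1^h=0\}\cap\{P_\pm^h=0\}$ already gives degree $4$ directly by B\'ezout, and your graph description supplies reducedness.
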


\begin{proof}
Substituting $k=1$ into the line--circle bisector gives
\[
F_2(x,y,z)=\bigl(x^2-2z+R^2+1\bigr)^2-4R^2(x^2+y^2)=0.
\]
From the line--line bisector
\[
F_1=y^2-(xs-yc)^2+2z-1=0,
\]
we solve for $z$:
\[
z=\frac12(xs-yc)^2-\frac12y^2+\frac12.
\]
Substituting this into the first term of $F_2$ yields
\[
x^2-2z+R^2+1=x^2+y^2-(xs-yc)^2+R^2.
\]
Using the rotational identity
\[
x^2+y^2=(xc+ys)^2+(xs-yc)^2,
\]
we obtain the projected equation
\[
P(x,y)=\bigl((xc+ys)^2+R^2\bigr)^2-4R^2\bigl((xc+ys)^2+(xs-yc)^2\bigr)=0.
\]
Rearranging,
\[
P(x,y)=\bigl((xc+ys)^2-R^2\bigr)^2-\bigl(2R(xs-yc)\bigr)^2.
\]
Hence
\[
P(x,y)=P_1(x,y)\,P_2(x,y),
\]
where
\[
P_1(x,y)=(xc+ys)^2-2R(xs-yc)-R^2,
\qquad
P_2(x,y)=(xc+ys)^2+2R(xs-yc)-R^2.
\]
Thus, at $k=1$, the quartic surface $F_2=0$ decomposes into the union of two quadratic cylinders. Intersecting each of these degree-two surfaces with the degree-two surface $F_1=0$ yields a degree-four space curve. Therefore the original degree-eight trisector splits into two quartic components.

To compute the intersection of these two quartic curves, it suffices to solve
\[
P_1(x,y)=0,\qquad P_2(x,y)=0.
\]
Subtracting the two equations gives
\[
4R(xs-yc)=0,
\]
hence
\[
xs=yc.
\]
Adding the two equations gives
\[
2(xc+ys)^2-2R^2=0,
\]
so
\[
(xc+ys)^2=R^2.
\]
Using $xs=yc$, we obtain
\[
(x,y)=(Rc,Rs)
\qquad\text{or}\qquad
(x,y)=(-Rc,-Rs).
\]
Substituting either point into $F_1$ gives
\[
z=\frac{1-R^2s^2}{2}.
\]
Therefore the two quartic components intersect in exactly the two real affine points claimed in the statement.

It remains to verify that these are ordinary real nodes. Let
\[
C_1=\{F_1=0,\ P_1=0\},
\qquad
C_2=\{F_1=0,\ P_2=0\}.
\]
At a common point, each component is smooth if the gradients of its defining surfaces are linearly independent, and the two components meet transversely if their tangent lines are distinct.

We compute
\[
\nabla F_1(x,y,z)=\bigl(-2s(xs-yc),\ 2y+2c(xs-yc),\ 2\bigr),
\]
\[
\nabla P_1(x,y,z)=2(c,s,0)(xc+ys)-2R(s,-c,0),
\qquad
\nabla P_2(x,y,z)=2(c,s,0)(xc+ys)+2R(s,-c,0).
\]
At
\[
p_+=\left(Rc,Rs,\frac{1-R^2s^2}{2}\right),
\]
we have $xc+ys=R$ and $xs-yc=0$, hence
\[
\nabla F_1(p_+)=(0,\,2Rs,\,2),
\]
\[
\nabla P_1(p_+)=2R(c-s,\ s+c,\ 0),
\qquad
\nabla P_2(p_+)=2R(c+s,\ s-c,\ 0).
\]
The determinant of the $3\times 3$ matrix formed by these three normals is
\[
\det
\begin{pmatrix}
0 & 2Rs & 2\\
2R(c-s) & 2R(s+c) & 0\\
2R(c+s) & 2R(s-c) & 0
\end{pmatrix}
=-16R^2\neq 0.
\]
Thus $C_1$ and $C_2$ are both smooth at $p_+$ and have distinct tangent lines there. The same computation at
\[
p_-=\left(-Rc,-Rs,\frac{1-R^2s^2}{2}\right)
\]
again yields a nonzero determinant up to sign. Hence both affine intersection points are ordinary real nodes.
\end{proof}

Theorem~\ref{thm:degenerate-k1} shows that the candidate boundary $k=1$ is accompanied by an explicit algebraic transition: the irreducible degree-eight trisector of the generic region becomes reducible and splits into two quartic components meeting transversely in affine space.

\subsubsection{The boundary value $k=0$}

We next examine the second boundary value $k=0$. This is the other parameter value at which the tangent-cone discriminant $\Delta_Q=4ks^2(k-1)$ vanishes. As in the case $k=1$, the point is not merely that the tangent-direction pattern degenerates, but that the affine structure of the trisector undergoes a genuine algebraic transition.

\begin{theorem}\label{thm:degenerate-k0}
At the boundary value $k=0$, the degree-eight trisector becomes reducible and splits into two algebraic space curves of degree four. These two quartic components intersect in exactly two real affine points:
\[
\left(R,0,\frac{1+R^2s^2}{2}\right)
\qquad\text{and}\qquad
\left(-R,0,\frac{1+R^2s^2}{2}\right).
\]
Moreover, each of these two affine intersection points is an ordinary real node of the boundary trisector.
\end{theorem}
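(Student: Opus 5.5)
The plan mirrors the proof of Theorem~\ref{thm:degenerate-k1}, and the case $k=0$ turns out to be simpler. Here the quartic bisector factors directly, without first eliminating $z$ through $F_1$. Substituting $k=0$ into $F_2$ gives
\[
F_2(x,y,z)=\bigl(x^2+R^2\bigr)^2-4R^2(x^2+y^2)=\bigl(x^2-R^2\bigr)^2-(2Ry)^2 .
\]
This is a difference of squares, so $F_2=P_1P_2$ with
\[
P_1=x^2-2Ry-R^2,\qquad P_2=x^2+2Ry-R^2 .
\]
Each factor defines a parabolic cylinder, that is, a quadric surface independent of $z$. The trisector therefore splits as $C_1\cup C_2$, where
\[
C_1=\{F_1=0,\ P_1=0\},\qquad C_2=\{F_1=0,\ P_2=0\}.
\]

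Next I will show that each $C_i$ is a space curve of degree four. The surface $F_1=0$ is irreducible, because $F_1$ is linear in $z$ with a nonzero constant coefficient. It is not contained in either cylinder $P_i=0$, since those do not involve $z$. Hence each intersection is proper, of dimension one, and Bézout gives degree $2\cdot 2=4$. The two degrees add up to the generic degree eight, so the degree-eight trisector splits into two quartic components as claimed.

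For the intersection points, I subtract $P_1=0$ from $P_2=0$ to get $4Ry=0$, so $y=0$ because $R>0$. Then $x^2=R^2$, so $x=\pm R$. Substituting into $F_1$ gives
\[
-(\pm Rs)^2+2z-1=0,\qquad\text{hence}\qquad z=\frac{1+R^2s^2}{2}.
\]
This yields exactly the two stated real points and no others. In particular, there are no further common points, real or complex, in affine space.

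The node check is a $3\times3$ determinant. At $p_+=\bigl(R,0,\tfrac{1+R^2s^2}{2}\bigr)$ we have $xs-yc=Rs$. Using the gradient formula for $F_1$ from the proof of Theorem~\ref{thm:degenerate-k1}, the three normals are
\[
\nabla F_1(p_+)=(-2Rs^2,\ 2Rcs,\ 2),\qquad
\nabla P_1(p_+)=(2R,\ -2R,\ 0),\qquad
\nabla P_2(p_+)=(2R,\ 2R,\ 0).
\]
Expanding the determinant along the third column gives $2\cdot(4R^2+4R^2)=16R^2\neq0$. At $p_-=\bigl(-R,0,\tfrac{1+R^2s^2}{2}\bigr)$ the same computation gives $\pm16R^2\neq0$.

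Nonvanishing of this determinant has two consequences. The pair $\{\nabla F_1,\nabla P_i\}$ is linearly independent, so each $C_i$ is smooth at $p_\pm$. The tangent line of $C_i$ is the common kernel of its two normals, so if the two tangent lines coincided, $\nabla F_1,\nabla P_1,\nabla P_2$ would all lie in a plane and the determinant would vanish. Hence the tangent lines are distinct. Two smooth branches meeting transversely form an ordinary real node of $C_1\cup C_2$.

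I expect the main obstacle to be the degree and properness bookkeeping, not the node computation. One must make sure the components are genuinely of degree four and distinct. This needs two facts: $F_1$ shares no component with either cylinder, and $C_1\neq C_2$, which follows because their intersection is zero-dimensional. The node verification itself is routine linear algebra.
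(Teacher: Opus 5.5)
Your proposal is correct and follows the paper's proof essentially step for step. It uses the same difference-of-squares factorization $F_2=(x^2-R^2)^2-(2Ry)^2$ into two parabolic cylinders, the same subtraction $4Ry=0$ to locate $(\pm R,0,\tfrac{1+R^2s^2}{2})$, and the same $3\times 3$ gradient determinant. Your $P_1,P_2$ are the paper's $P_-,P_+$, which is why you get $16R^2$ instead of $-16R^2$. Your added remarks on the irreducibility of $F_1$, the properness of each intersection, and why a nonvanishing determinant gives smooth branches with distinct tangent lines fill in points the paper leaves implicit, but they do not change the route.
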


\begin{proof}
Substituting $k=0$ into the line--circle bisector gives
\[
F_2(x,y,z)=\bigl(x^2+R^2\bigr)^2-4R^2(x^2+y^2)=0.
\]
Rearranging,
\[
F_2=(x^2-R^2)^2-(2Ry)^2.
\]
Hence
\[
F_2(x,y,z)=P_+(x,y)\,P_-(x,y),
\]
where
\[
P_+(x,y)=x^2+2Ry-R^2,
\qquad
P_-(x,y)=x^2-2Ry-R^2.
\]
Thus, at $k=0$, the quartic surface $F_2=0$ decomposes into the union of two quadratic cylinders. Intersecting each of these degree-two surfaces with the degree-two surface
\[
F_1=y^2-(xs-yc)^2+2z-1=0
\]
yields a degree-four space curve. Therefore the original degree-eight trisector splits into two quartic components
\[
C_+=\{F_1=0,\ P_+=0\},
\qquad
C_-=\{F_1=0,\ P_-=0\}.
\]

To compute their intersection, it suffices to solve
\[
P_+(x,y)=0,\qquad P_-(x,y)=0.
\]
Subtracting the two equations gives
\[
4Ry=0.
\]
Since $R>0$, it follows that $y=0$. Substituting into either equation yields
\[
x^2-R^2=0,
\]
so
\[
x=\pm R.
\]
Substituting either point into $F_1$ gives
\[
-(xs)^2+2z-1=0,
\]
hence
\[
z=\frac{1+R^2s^2}{2}.
\]
Therefore the two quartic components intersect in exactly the two real affine points claimed in the statement.

We now verify that these are ordinary real nodes. We compute
\[
\nabla F_1(x,y,z)=\bigl(-2s(xs-yc),\ 2y+2c(xs-yc),\ 2\bigr),
\]
\[
\nabla P_+(x,y,z)=(2x,\,2R,\,0),
\qquad
\nabla P_-(x,y,z)=(2x,\,-2R,\,0).
\]
At
\[
q_+=\left(R,0,\frac{1+R^2s^2}{2}\right),
\]
we obtain
\[
\nabla F_1(q_+)=(-2Rs^2,\,2Rsc,\,2),
\]
\[
\nabla P_+(q_+)=(2R,\,2R,\,0),
\qquad
\nabla P_-(q_+)=(2R,\,-2R,\,0).
\]
Therefore
\[
\det
\begin{pmatrix}
-2Rs^2 & 2Rsc & 2\\
2R & 2R & 0\\
2R & -2R & 0
\end{pmatrix}
=-16R^2\neq 0.
\]
So $C_+$ and $C_-$ are both smooth at $q_+$ and have distinct tangent lines there. The same computation at
\[
q_-=\left(-R,0,\frac{1+R^2s^2}{2}\right)
\]
again gives a nonzero determinant up to sign. Hence both affine intersection points are ordinary real nodes.
\end{proof}

Theorem~\ref{thm:degenerate-k0} shows that the candidate boundary $k=0$ is likewise accompanied by an explicit algebraic transition: the degree-eight trisector becomes reducible and splits into two quartic components meeting transversely in affine space.

\begin{corollary}[Actual transition walls]\label{cor:actual-walls}
For the canonical symmetric line--line--circle family, the boundary values $k=0$ and $k=1$ are actual topological transition values. Equivalently, the certified topological transition set is
\[
\Sigma=\{k=0\}\cup\{k=1\}.
\]
\end{corollary}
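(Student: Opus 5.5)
The plan has two halves: no topological change happens off $\{k=0\}\cup\{k=1\}$, and a genuine change happens on each of these walls. The first half I would assemble from earlier results. By Theorem~\ref{thm:affine-smooth}, $\Sigma_{\mathrm{affine}}=\emptyset$ on the admissible region. By Theorem~\ref{thm:infinity-analysis}, $p_\infty$ is the only point at infinity and $\Sigma_\infty\subseteq\{k=0,\,k=1\}$. By Theorem~\ref{thm:local-infinity} together with Lemma~\ref{lem:t-sign}, the local real branch pattern at $p_\infty$ is constant on each of the chambers from Proposition~\ref{prop:witness-cells}.

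To pass from ``no singular event'' to ``constant global topology'', I would argue by isotopy. Along any path inside a chamber, the affine part of the real curve stays a smooth one-dimensional manifold. Its ends near $p_\infty$ come in a fixed number, two pairs of crossing branches for $k<0$ or $k>1$ and none for $0<k<1$, with directions $u_\pm$ depending continuously on the parameters. Since the family is proper in $\mathbb{P}^3$ and the real points vary continuously, a parametric Ehresmann or Thom isotopy argument applied to the smooth part, glued to the constant local model at $p_\infty$, shows the real projective topology is locally constant. Hence any transition value must lie in $\{k=0\}\cup\{k=1\}$.

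For the converse, I need to show that each wall is an actual transition and not merely an algebraic over-approximation. Theorems~\ref{thm:degenerate-k1} and~\ref{thm:degenerate-k0} supply real affine ordinary nodes at $k=1$ and $k=0$. For $k$ near the wall on either side, the curve has no affine singular point, whereas at the wall it does. A real ordinary node cannot persist as a topological crossing in a smooth curve, so the local topology near the node changes: the transverse crossing of two real branches is resolved into a smoothing of one of the two types. Thus the topological type at $k=1$ (respectively $k=0$) differs from that of nearby chambers. This is corroborated at infinity by Theorem~\ref{thm:local-infinity}: crossing $k=0$ or $k=1$ switches between two and zero real asymptotic slope points, so the number of real ends at $p_\infty$ changes. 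That alone already shows the adjacent chambers are topologically distinct, so the walls cannot be removed. Combining both halves gives $\Sigma=\{k=0\}\cup\{k=1\}$.

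The main obstacle I expect is the isotopy step. Theorem~\ref{thm:local-infinity} controls only the finite-slope directions in the chart $Y=uX$, so the slope at infinity must also be checked, using the chart $X=vY$. There the leading coefficient $ks^2$ of $g_k$ is nonzero off $k=0$, so no root escapes to $v=0$. I must also make sure that the local picture at $p_\infty$, a singular point, is uniform in the parameters. I would handle this by noting that the simple real roots $u_\pm$ and the positive constant $\lambda_\pm$ vary continuously, and that the lower bound $m>0$ on $g_k^2$ is locally uniform. These give a uniform neighbourhood of $p_\infty$, outside of which standard stability of smooth compact curves applies.
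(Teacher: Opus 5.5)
Your second half is sound and is essentially the paper's argument. The paper also compares the ordinary real nodes of Theorems~\ref{thm:degenerate-k1} and~\ref{thm:degenerate-k0}, whose punctured neighbourhoods have four components, with the manifold points of nearby members, which have two. Your extra observation that the number of real half-branches at $p_\infty$ jumps between $8$ and $0$ across each wall is a good independent check that the adjacent chambers themselves differ.

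\textbf{The gap.} Your first half rests on the premise that along every path inside a chamber the affine real curve stays a smooth $1$-manifold, i.e.\ on Theorem~\ref{thm:affine-smooth}, and that premise is false.
\begin{itemize}
\item On $F_1=0$ one has $z=\tfrac12+Q(x,y)$. Hence $x^2-2kz+R^2+k^2=E(x,y)+\delta$ with $\delta=R^2-k(1-k)$, and the trisector is isomorphic to the plane quartic $(E+\delta)^2=4R^2(x^2+y^2)$.
\item If $0<k<1$ and $R^2=k(1-k)$, the centre $(0,0,\tfrac12)$ of $C$ lies on the trisector and $\nabla F_2$ vanishes there. The lowest-order term $-4R^2(x^2+y^2)$ then makes it an isolated real point.
\item Concretely, at $(k,R,t)=(\tfrac12,\tfrac12,1)$ the restriction of $F_2$ to $F_1=0$ is $\tfrac14(x^2+y^2)(x^2+y^2-4)$. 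So the real trisector is an oval plus an acnode, whereas for nearby $R\neq\tfrac12$ it is two disjoint ovals.
\item The lifted Jacobian test behind Theorem~\ref{thm:affine-smooth} cannot see this. Over $x=y=0$ the constraint $yu-xv=0$ is vacuous, so the whole circle $u^2+v^2=R^2$ is a smooth component of the lifted variety that collapses to a single point under projection.
\end{itemize}
The smoothness hypothesis of your Ehresmann/Thom step therefore fails on this surface, and no isotopy argument can give constancy on all of $0<k<1$.

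\textbf{Consequence.} The ``equivalently'' sentence of the statement is false as stated. The surface $\{R^2=k(1-k),\ 0<k<1\}$ is also a locus where the topology is not locally constant, even though the types on its two sides agree (two ovals), so it must be added to $\Sigma$. The paper's own proof of the corollary argues only the wall direction and inherits this error through its reliance on Theorem~\ref{thm:affine-smooth}.

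\textbf{What survives.}
\begin{itemize}
\item $k=0$ and $k=1$ are genuine transition values. Your node argument is legitimate there because $\delta\to R^2>0$ near the walls, so nearby members really are smooth.
\item Your isotopy argument, including the $X=vY$ chart and the uniformity estimates you list, can be run on the complement of $\{k=0\}\cup\{k=1\}\cup\{R^2=k(1-k)\}$. There the curve is smooth: off the $z$-axis a real singular point would require $\lambda\delta=R^2$ for an eigenvalue $\lambda$ of $E$, which is impossible for $R>0$ and $k\neq 0,1$.
\end{itemize}
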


\begin{proof}
Let $T_\lambda$ denote the trisector corresponding to an admissible parameter value $\lambda$. By Theorem~\ref{thm:affine-smooth}, every affine point of $T_\lambda$ is locally smooth. Hence, for sufficiently small $\varepsilon>0$, the punctured neighborhood of such a point has exactly two connected components.

At the boundary values $k=0$ and $k=1$, Theorems~\ref{thm:degenerate-k1} and~\ref{thm:degenerate-k0} show that the trisector becomes reducible and has ordinary real affine nodes. At such a node, a sufficiently small punctured neighborhood has four connected components.

A local ambient homeomorphism preserves the number of connected components of a punctured neighborhood. Therefore a nodal boundary member cannot be locally homeomorphic to a nearby admissible smooth member. It follows that the topology is not locally constant across either boundary. Hence both $k=0$ and $k=1$ are actual topological transition values.
\end{proof}

\section{Classical Three-Line Benchmark}\label{sec:three-line}

We conclude the main body of the paper by testing the same verification pipeline on the classical trisector of three skew lines in $\mathbb{R}^3$. The purpose of this section is not to rederive the full structural theory of Everett et al.~\cite{everett2007Voronoi}, which relies on the pencil-of-quadrics geometry specific to the all-line case. Rather, the goal is methodological: we use the three-line configuration as a control benchmark to check that the exact transition-set framework developed in this paper behaves as expected in the classical quadric-only regime.

This comparison is useful for two reasons. First, it separates genuinely new mixed-object phenomena from artifacts of the symbolic pipeline. In the three-line setting, all relevant bisectors are quadrics, so the trisector remains inside the classical pencil-of-quadrics framework. In the canonical line--line--circle family, by contrast, one bisector is quartic, and the projective analysis at infinity produces an additional transition mechanism governed by the discriminant $\Delta_Q=4ks^2(k-1)$. Thus the all-line case provides a natural baseline against which the mixed-object behavior can be measured.

Second, the benchmark clarifies the scope of our method. The framework introduced in this paper is not tied to the line--line--circle geometry itself; it is meant to be a general exact-verification template for bounded-dimensional parametric families of trisectors. Applying it to the classical three-line case shows that, when the underlying geometry remains in the quadric regime, the same pipeline recovers the expected absence of an additional asymptotic wall.

\subsection{Generic rational benchmark configuration}

Following the parameterization used by Everett et al.~\cite{everett2007Voronoi}, consider the three skew lines
\[
L_1:\ y=ax,\ z=1,
\qquad
L_2:\ y=-ax,\ z=-1,
\]
and
\[
L_3:\ (x,y,z)=(u,v,0)+\lambda(\alpha,\beta,1),
\]
where the parameters are chosen so that the three lines are pairwise skew and not all parallel to a common plane. To obtain an exact benchmark over $\mathbb{Q}$, we fix the rational witness
\[
a=2,\qquad u=1,\qquad v=2,\qquad \alpha=1,\qquad \beta=1.
\]

For this witness, the homogenized bisector equations may be written as
\[
F_{12}^h = ZW(1+a^2)+aXY
\]
and
\begin{align*}
F_{13}^h
&=
(u^2+v^2)X^2+(\alpha^2+v^2)Y^2-(u^2+\alpha^2)Z^2
-2u\alpha\,XY-2v\alpha\,XZ-2uv\,YZ \\
&\qquad
+2\beta(u^2+v^2)XW-2a\beta(u^2+v^2)YW.
\end{align*}
After substitution of the witness values, these define a concrete exact test instance over $\mathbb{Q}$.

Because both defining bisectors are quadrics, the trisector is the intersection of two quadratic surfaces, and hence lies in a degree-four projective class. The exact symbolic checks below confirm that the witness behaves exactly as expected in the generic three-line case.

\begin{theorem}\label{thm:three-line-benchmark}
For a generic three-line configuration in $\mathbb{R}^3$, the same witness-based exact pipeline used in this paper certifies that the projective trisector has degree $4$, that its affine singular locus is empty, and that no additional transition wall appears at infinity.
Equivalently, in the classical three-line benchmark there is no asymptotic contribution analogous to $\Sigma_\infty$.
\end{theorem}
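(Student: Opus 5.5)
The plan is to run the three steps of Algorithm~\ref{alg:verification} on the exact witness $(a,u,v,\alpha,\beta)=(2,1,2,1,1)$, and then argue that each conclusion is an open condition in the parameters, so it persists for a generic configuration. Throughout, all computations take place in $\mathbb{Q}[X,Y,Z,W]$ after substituting the witness into $F_{12}^h$ and $F_{13}^h$.

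\textbf{Degree.} First I will check that $F_{12}^h$ and $F_{13}^h$ share no common factor. The form $F_{12}^h=5ZW+2XY$ is an irreducible quadric, since its rank is $4$. It therefore suffices to verify that $F_{13}^h\notin\langle F_{12}^h\rangle$, which is immediate from the monomials that occur. Hence the two quadrics intersect properly, and by the Bézout bound from Section~\ref{sec:prelim} the curve has degree at most $4$. To get equality, I will compute the Hilbert polynomial of $\langle F_{12}^h,F_{13}^h\rangle:W^\infty$ and confirm that it is $4n$. Equivalently, I can confirm that a generic rational plane meets the affine curve in $4$ points counted over $\overline{\mathbb{Q}}$; a univariate resultant of degree $4$ with nonzero discriminant will do.

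\textbf{Affine smoothness.} Set $W=1$ and form the ideal generated by the two bisector equations together with the three $2\times2$ minors of $\partial(F_{12},F_{13})/\partial(x,y,z)$. A Gröbner basis computation over $\mathbb{Q}$ should return $\{1\}$. By Proposition~\ref{prop:nullstellensatz}, this certifies that the complex, and hence the real, affine singular locus is empty.

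\textbf{Infinity.} Restricting to $W=0$ gives $F_{12}^h=aXY$, so every point at infinity satisfies $X=0$ or $Y=0$. Substituting into $F_{13}^h|_{W=0}$ leaves a binary quadratic in $(Y,Z)$ or in $(X,Z)$. This yields at most four points at infinity, all explicitly computable; here I will saturate by $W$ to discard any spurious components. The key claim is that at each such point the projective Jacobian $\partial(F_{12}^h,F_{13}^h)/\partial(X,Y,Z,W)$ has rank $2$. I will verify this by again adjoining the minors to the ideal of the points at infinity and obtaining the unit ideal. Consequently each point at infinity is a smooth point of the projective closure. The tangent cone there is then a single line rather than a quadratic form $E$, so there is no discriminant analogous to $\Delta_Q$, and no candidate wall $\Sigma_\infty$ arises. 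This is exactly the contrast with Theorem~\ref{thm:infinity-analysis}, where $F_2^h|_{W=0}=X^4$ forced a higher-order point.

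\textbf{Passing from the witness to generic parameters.} This is the step I expect to be the main obstacle. Each certificate above has the form ``some ideal is the unit ideal'' or ``some discriminant or resultant is nonzero''. Both statements are Zariski-open in $(a,u,v,\alpha,\beta)$, because they hold precisely off the vanishing locus of an elimination polynomial. Nonvanishing at the rational witness therefore shows that this polynomial is not identically zero, so all the properties hold on a dense open set of configurations, and this is how I interpret ``generic''. If the parametric elimination itself is too expensive to carry out in full, I will fall back on this openness argument alone, which only needs the witness certificates and the observation that the defining polynomials depend polynomially on the parameters.
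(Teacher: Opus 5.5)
\textbf{There is a genuine gap in your infinity step and in how you pass to generic parameters.} Your degree and affine-smoothness steps match what the paper does: an exact witness computation over $\mathbb{Q}$, saturation by $W$, and a unit-ideal certificate. The problem is that the heart of the statement is that \emph{no} wall appears at infinity, so the three-line analogue of $\Sigma_\infty$ must be empty over the whole parameter space. A certificate at one witness plus Zariski-openness (even granting openness) only shows that the bad locus is a proper Zariski-closed subset. A proper hypersurface in parameter space is exactly what a transition wall is. Compare the line--line--circle family: the certificate $\Delta_Q\neq 0$ at the witness $(k,R,t)=(2,1,1)$ is also an open condition, yet $k=0$ and $k=1$ are walls by Corollary~\ref{cor:actual-walls}.

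Moreover, projective smoothness at the points at infinity is the wrong invariant. The asymptotic direction structure can degenerate while the closure stays smooth: two of the points at infinity can collide (the curve becomes tangent to $W=0$) and then turn complex. That changes the number of real unbounded branches. You note the points are ``explicitly computable,'' but you never check that they are real and distinct, even at the witness.

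\textbf{The missing idea is to keep the parameters symbolic at infinity, as the paper does.} Translations affect only lower-order terms, so the leading forms on $W=0$ depend only on the directions $\vec d_1,\vec d_2,\vec d_3$. The $L_1L_2$ bisector gives $XY=0$. On $X=0$ and on $Y=0$, the $L_1L_3$ bisector restricts to binary quadratics with discriminants $\Delta_X=4a^2(1+a^2)(\alpha^2+\beta^2+1)$ and $\Delta_Y=4(1+a^2)(\alpha^2+\beta^2+1)$. Both are strictly positive whenever $a\neq 0$. The two pairs of roots never share $[0:0:1:0]$, since the $Z^2$-coefficient $1+a^2$ is nonzero. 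So every admissible configuration has four distinct real asymptotic directions, and no parameter discriminant arises at infinity at all.

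This also yields your smoothness claim. A degree-$4$ curve meeting $W=0$ in four distinct points meets it transversally at smooth points.

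\textbf{A secondary point.} Your assertion that a unit-ideal certificate in the affine chart is Zariski-open is false in general, because solutions can escape to infinity. For example, the fiber of $\langle \lambda x-1\rangle$ at $\lambda=0$ is empty, but the elimination ideal is zero. Transferring the affine-smoothness certificate needs either a parametric elimination, as in Theorem~\ref{thm:affine-smooth}, or a properness argument in $\mathbb{P}^3$.
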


\begin{proof}
We run the same three checks as in the mixed-object case: affine singularity testing, projective closure, and infinity analysis.

\paragraph{Affine part.}
Since all three sites are lines, the trisector is already defined by polynomial equations, with no radical term analogous to the line--circle distance. We form the homogenized projective ideal and saturate with respect to $W$ to obtain the true projective closure
\[
I_{\mathrm{proj}}=\operatorname{saturate}(\langle F_{12}^h,F_{13}^h\rangle,\langle W\rangle).
\]
For the generic witness above, a direct exact computation shows that the corresponding projective curve has degree $4$.

To test affine smoothness, we adjoin the Jacobian rank-drop conditions in projective space,
\[
\mathrm{SingProj}=I_{\mathrm{proj}}+\mathrm{minors}(2,\mathrm{jacobian}(I_{\mathrm{proj}})),
\]
and then remove the hyperplane at infinity:
\[
\mathrm{SingAffine}=\operatorname{saturate}(\mathrm{SingProj},\langle W\rangle).
\]
The resulting ideal is the unit ideal,
\[
\mathrm{SingAffine}=\langle 1\rangle.
\]
Hence the affine singular locus is empty.

\paragraph{Behavior at infinity.}
We now inspect the leading forms on the hyperplane at infinity $W=0$. The point is not to reproduce the full classical theory, but to explain why the exact infinity check used in the mixed-object case produces no additional wall here.

At infinity, only the highest-degree part of the squared-distance expressions matters. Since translations affect only lower-order terms, the asymptotic squared distance from
\[
P=[X:Y:Z:0]
\]
to a line with direction vector $\vec d$ is proportional to
\[
\frac{\|P\times \vec d\|^2}{\|\vec d\|^2}.
\]
For the direction vectors
\[
\vec d_1=(1,a,0),\qquad
\vec d_2=(1,-a,0),\qquad
\vec d_3=(\alpha,\beta,1),
\]
the corresponding leading forms are
\begin{align*}
D_\infty^2(P,L_1)
&=
\frac{a^2X^2-2aXY+Y^2+(1+a^2)Z^2}{1+a^2}, \\
D_\infty^2(P,L_2)
&=
\frac{a^2X^2+2aXY+Y^2+(1+a^2)Z^2}{1+a^2}, \\
D_\infty^2(P,L_3)
&=
\frac{(Y-\beta Z)^2+(\alpha Z-X)^2+(\beta X-\alpha Y)^2}
{\alpha^2+\beta^2+1}.
\end{align*}

Equating the first two expressions gives the asymptotic bisector equation
\[
F_{12}^{\infty}(X,Y,Z)=-4aXY=0.
\]
For nonparallel skew lines, $a\neq 0$, so this reduces to
\[
X=0
\qquad\text{or}\qquad
Y=0.
\]
Thus the asymptotic directions are already constrained by the line directions alone.

Substituting these two cases into the second asymptotic equation
\[
D_\infty^2(P,L_1)=D_\infty^2(P,L_3)
\]
gives explicit quadratic equations.

If $X=0$, clearing denominators yields a quadratic form
\[
AY^2+BYZ+CZ^2=0
\]
with coefficients
\[
A=\beta^2-a^2(1+\alpha^2),\qquad
B=2\beta(1+a^2),\qquad
C=1+a^2,
\]
whose discriminant is
\[
\Delta_X=4a^2(1+a^2)(\alpha^2+\beta^2+1)>0.
\]

If $Y=0$, we obtain a quadratic form in $(X,Z)$ with coefficients
\[
A=a^2\alpha^2-(1+\beta^2),\qquad
B=2\alpha(1+a^2),\qquad
C=1+a^2,
\]
whose discriminant is
\[
\Delta_Y=4(1+a^2)(\alpha^2+\beta^2+1)>0.
\]

Hence the three-line trisector has four distinct real asymptotic directions, and the exact infinity check produces no additional parameter discriminant or transition wall analogous to the mixed-object factor $\Delta_Q=4ks^2(k-1)$.

Therefore the witness-based transition-set pipeline recovers exactly the expected classical behavior: quartic degree, affine smoothness, and no extra asymptotic wall.
\end{proof}

Theorem~\ref{thm:three-line-benchmark} should be interpreted as a control benchmark rather than a replacement for the classical pencil-of-quadrics theory. The point is that the exact verification framework used throughout this paper does not create spurious walls at infinity: when applied to the all-line case, it recovers the expected absence of an additional asymptotic transition.

\subsection{Compatibility with the degenerate cubic-plus-line case}

For completeness, we also record a symbolic benchmark for the classical degenerate case in which the trisector splits into a nonsingular cubic and a line. This is the standard nongeneric degeneration described by Everett et al.~\cite{everett2007Voronoi}, and it provides a second consistency check for the present framework.

Everett et al.~\cite{everett2007Voronoi} showed that the quartic trisector degenerates into a nonsingular cubic and a line precisely when the hyperboloid containing the three lines is a hyperboloid of revolution, and that these two components do not meet in the real projective space $\mathbb{P}^3(\mathbb{R})$. Their proof uses the characteristic polynomial of the associated pencil of quadrics. Our goal here is not to reprove that theory, but to verify that the same symbolic viewpoint used in the mixed-object case is compatible with this standard degenerate benchmark.

Applying the corresponding degeneracy condition to the present parameterization yields a rational witness configuration, for example
\[
a=2,\qquad
(u,v)=(10,-4),\qquad
(\alpha,\beta)=(2,0).
\]
Substituting these values into the bisector equations produces a concrete exact test instance over $\mathbb{Q}$.

For this witness, the saturated projective closure decomposes into two irreducible components of degrees $3$ and $1$, respectively. Thus the trisector splits as a cubic together with a line. Moreover, the cubic component is nonsingular, and the two components have no real intersection points. Algebraically, their projective intersection is zero-dimensional over $\mathbb{C}$, but all such points are nonreal: there is no real projective intersection at infinity, and after dehomogenization to affine space the elimination polynomial reduces to
\[
Z^2+125=0,
\]
which has no real root.

We summarize this compatibility check as follows.

\begin{proposition}\label{prop:three-line-degenerate}
The witness-based symbolic framework remains consistent even in the classical degenerate three-line case: for a rational witness in this regime, the projective trisector correctly splits into a nonsingular cubic and a line, with no real intersection points in $\mathbb{P}^3(\mathbb{R})$.
\end{proposition}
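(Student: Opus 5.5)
The plan is to certify Proposition~\ref{prop:three-line-degenerate} by exact symbolic computation over $\mathbb{Q}$ at the stated witness $a=2$, $(u,v)=(10,-4)$, $(\alpha,\beta)=(2,0)$. We run the same pipeline as in Theorem~\ref{thm:three-line-benchmark}. First substitute the witness into $F_{12}^h$ and $F_{13}^h$. Then form the saturation
\[
I_{\mathrm{proj}}=\operatorname{saturate}(\langle F_{12}^h,F_{13}^h\rangle,\langle W\rangle),
\]
which discards any spurious components supported on $\{W=0\}$. Before doing this, we check that the witness really lies in the degenerate regime. The three lines must lie on a hyperboloid of revolution; equivalently, the characteristic polynomial of the pencil spanned by the two bisector quadrics has the multiple-root pattern that Everett et al.~\cite{everett2007Voronoi} associate with the cubic-plus-line case. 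This is a finite check on a rational $4\times4$ determinant in the pencil parameter.

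Next, compute a primary decomposition of $I_{\mathrm{proj}}$ over $\mathbb{Q}$. We expect two prime components, $I_1$ and $I_3$. For $I_1$, we show it is generated by two independent linear forms, so it defines a line $\ell$. For $I_3$, we compute its Hilbert polynomial and expect $3n+1$, so it defines a curve $\Gamma$ of degree $3$ and genus $0$. As a consistency check, the degrees must add up to $4$, matching the quartic class from Theorem~\ref{thm:three-line-benchmark}. To show $\Gamma$ is nonsingular, we adjoin the $2\times2$ minors of the Jacobian of a generating set of $I_3$ and saturate with respect to the irrelevant ideal $\langle X,Y,Z,W\rangle$. We expect the result to be the unit ideal, so by Proposition~\ref{prop:nullstellensatz} the singular locus of $\Gamma$ is empty even over $\mathbb{C}$. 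We will also check that $I_3$ is geometrically prime, which rules out a degenerate cubic such as a union of a conic and a line.

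Finally, study $\ell\cap\Gamma$ through the zero-dimensional ideal $I_1+I_3$. We treat it in two charts. At infinity, add $W$ and show that $I_1+I_3+\langle W\rangle$ has no real projective solution. Because $\ell$ meets $\{W=0\}$ in a single point, this reduces to checking that this rational point does not satisfy $I_3$, or, if it does, that the point is nonreal. In the affine chart $W=1$, parametrize $\ell$ linearly and eliminate down to one variable. We expect the univariate eliminant to be $Z^2+125$. Since $Z^2+125>0$ on $\mathbb{R}$, there is no real affine intersection. Combining the two charts gives $\ell\cap\Gamma\cap\mathbb{P}^3(\mathbb{R})=\emptyset$.

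The main obstacle is the decomposition step. A primary decomposition over $\mathbb{Q}$ could in principle return the line with a conjugate partner, or split the cubic further. Our first approach is to find the line explicitly. Since $\beta=0$ and the configuration has a symmetry, we expect $\ell$ to be a rational line such as the axis of the hyperboloid of revolution. We would then obtain the cubic as the colon ideal $I_{\mathrm{proj}}:I_1$ and verify its degree and primality directly, instead of depending on a black-box decomposition.
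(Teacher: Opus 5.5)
Your proposal is correct and is essentially the paper's own computation: saturate by $W$, decompose into components of degrees $3$ and $1$, certify the cubic by Jacobian minors, and eliminate the sum of the component ideals down to $Z^2+125$, with a separate check at infinity. Your other steps are simply extra safeguards on that same route: the explicit line (it is indeed the hyperboloid's axis $\{X=5W,\ Z=-2Y\}$), the colon ideal, the geometric-primality test, and saturation by the irrelevant ideal rather than by $W$. One practical caution: instantiate the true $L_1$--$L_3$ bisector, which at this witness is $3X^2-4Y^2+Z^2-4XY+4XZ+20XW-40YW-50ZW-175W^2$ (as in the paper's script), because every $W$-term of the generic $F_{13}^h$ displayed in the benchmark section is proportional to $\beta$, so at $\beta=0$ it would degenerate to a cone.
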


\begin{proof}
For the witness
\[
a=2,\qquad (u,v)=(10,-4),\qquad (\alpha,\beta)=(2,0),
\]
the corresponding bisector equations define a saturated projective ideal whose primary decomposition has two components of degrees $3$ and $1$. This gives the cubic-line splitting.

To check the geometry of the cubic component, one computes its projective singular locus and finds that it is empty, so the cubic is nonsingular. To check the intersection with the line, one forms the sum of the two component ideals. The resulting intersection ideal is zero-dimensional over $\mathbb{C}$, but after dehomogenizing to affine space and eliminating $X$ and $Y$, one obtains the polynomial
\[
Z^2+125.
\]
Since this polynomial has no real root, the cubic and the line have no real affine intersection. In addition, there is no real projective intersection on the hyperplane at infinity. Therefore the two components do not meet in $\mathbb{P}^3(\mathbb{R})$.
\end{proof}

The consistency across both generic and degenerate three-line cases demonstrates that our symbolic framework faithfully recovers the classical trisector structure. This reinforces the status of the line--line--circle transition walls as an intrinsic non-quadric phenomenon: the additional wall identified in the mixed-object family is not a byproduct of the verification pipeline, but a fundamental geometric feature that emerges only when transcending the quadric-only regime.

For reproducibility, the raw verification scripts for these benchmark computations are included in the appendix.
The mathematical content, however, is contained entirely in the main text.

\section{Cost of the Verification Pipeline}\label{sec:complexity}

We briefly record the symbolic cost of the verification pipeline used in this paper. The purpose of this section is not to claim a competitive alternative to full CAD in arbitrary dimension, nor to optimize constant factors in a practical implementation. Rather, the point is conceptual: for the bounded-dimensional parametric families considered here, the verification task can be organized so that the dependence on the parameter dimension is controlled and separated from the fixed-dimensional geometry of the trisector itself.

Recall that the pipeline has two logically distinct stages. First, we construct the candidate transition set
\[
\Sigma=\Sigma_{\mathrm{affine}}\cup\Sigma_{\infty},
\]
where $\Sigma_{\mathrm{affine}}$ comes from affine Jacobian rank-drop conditions and $\Sigma_{\infty}$ comes from degeneration of the tangent-direction equation at infinity. Second, we choose one rational witness from each connected component of $U\setminus\Sigma$ and run exact symbolic checks on the corresponding instantiated system.

The important structural point is that the trisector itself is \emph{not} reduced to finitely many points. We do not attempt a full algebraic decomposition of the entire curve. Instead, for a fixed witness parameter, only the singularity test is reduced to a zero-dimensional ideal, after adjoining Jacobian minors and removing components supported at infinity. This is what keeps the geometric part of the computation under control in fixed ambient dimension.

\begin{proposition}[Symbolic cost in the bounded-dimensional regime]\label{prop:complexity}
Assume that the ambient spatial dimension is fixed, say $n=3$, and that the number of defining geometric equations is bounded by a constant $s$. Under standard elimination bounds, the verification pipeline admits the estimate
\[
O\!\left(s^{m+1}d^{O(m)} + s^m d^{O(n)}\right),
\]
where $m$ is the parameter dimension and $d$ is the maximum polynomial degree.

The first term corresponds to constructing the candidate transition set in parameter space and extracting one witness from each connected component of its complement. The second term corresponds to witness-based exact chamber checks on fixed-dimensional instantiated systems. In particular, for fixed $n$ and bounded $s$, the dependence on the parameter dimension is singly exponential.
\end{proposition}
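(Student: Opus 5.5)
The estimate will come from bounding the two stages of Algorithm~\ref{alg:verification} separately and adding the results. This is a counting argument built on standard elimination and sampling bounds from~\cite{basu2006algorithms}; no new geometry is needed.

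\emph{Stage 1: the transition set and its witnesses.} The parametric system consists of at most $s$ polynomials in the $m$ parameters and the $n$ geometric variables, each of degree at most $d$. Adjoining Jacobian minors and forming the tangent-cone discriminant at infinity, as in Theorems~\ref{thm:affine-smooth} and~\ref{thm:infinity-analysis}, produces $O(s)$ polynomials. Their degrees are bounded by $d^{O(n)}$, which is $d^{O(1)}$ because $n$ is fixed. I would then eliminate the geometric variables (by resultants or Gröbner bases with single-exponential degree bounds) to obtain a family $\mathcal P$ of at most $O(s)$ polynomials in $m$ variables, each of degree $d^{O(1)}$, whose zero set contains $\Sigma$. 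A standard sampling algorithm for the realizable sign conditions of $\mathcal P$ in $\mathbb R^m$ returns at least one point in every connected component of $U\setminus\Sigma$. It runs in $s^{m+1}d^{O(m)}$ arithmetic operations. By Theorem~\ref{thm:milnor-thom}, the number of such components is $O((sd)^m)$, which is at most $s^m d^{O(m)}$. Rational witnesses can be extracted from these sample points because the chambers are open.

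\emph{Stage 2: the chamber checks.} For each witness, the instantiated system involves only $n$ variables, has $O(s)$ equations, and has degree $d$. The singularity test (adjoin minors, saturate by $W$, test whether the ideal is the unit ideal) reduces to a zero-dimensional or unit-ideal computation, as does the slope-chart analysis at infinity. Each of these costs $s^{O(1)}d^{O(n)}$. Since $s$ is a constant, this is $d^{O(n)}$ per chamber. I would then multiply by the number of chambers, $O((sd)^m)$, and absorb the resulting $d^{m}$ factor. This absorption is the \textbf{main obstacle}. The product is $s^m d^{O(m)+O(n)}$, and bringing it to the stated form $s^m d^{O(n)}$ requires either taking $m=O(n)$ or folding the extra $d^{O(m)}$ factor into the first term. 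I would take the second option: since $s^m d^{O(m)}\cdot d^{O(n)}\le s^{m+1}d^{O(m)}+s^m d^{O(n)}$ up to adjusting the constants in the exponents, the claimed bound follows. I would state this constant-adjusting step explicitly.

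\emph{Conclusion.} Adding the costs of the two stages gives $O\!\left(s^{m+1}d^{O(m)}+s^m d^{O(n)}\right)$. For fixed $n$ and bounded $s$, this is $d^{O(m)}$, which is singly exponential in $m$, as claimed.
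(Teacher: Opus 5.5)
Your proposal is correct and follows the same two-stage argument as the paper: eliminating the geometric variables, then applying Milnor--Thom and a sampling step, gives the $s^{m+1}d^{O(m)}$ term, and a cost of $d^{O(n)}$ per witness times $O((sd)^m)$ chambers gives the second term. Your explicit absorption of the leftover $d^{m}$ factor, justified by $d^{am}d^{bn}\le d^{2am}+d^{2bn}$ so that $s^m d^{am+bn}\le s^{m+1}d^{2am}+s^m d^{2bn}$, is more careful than the paper, which simply rewrites $(sd)^m d^{O(n)}$ as $s^m d^{O(n)}$ and thereby silently drops that factor.
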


\begin{proof}
We separate the verification pipeline into two parts: construction of the candidate transition set in parameter space, and witness-based exact checks on one sampled parameter value from each connected chamber.

\paragraph{Part 1: constructing the candidate transition set.}
We compute
\[
\Sigma=\Sigma_{\mathrm{affine}}\cup\Sigma_{\infty},
\]
where $\Sigma_{\mathrm{affine}}$ is defined by affine Jacobian rank-drop conditions and $\Sigma_{\infty}$ by degeneration of the direction equation at infinity. The key point is that these elimination steps are carried out only over the geometric variables together with a bounded number of auxiliary variables. Since the ambient spatial dimension is fixed in the present paper, standard elimination bounds give a symbolic cost of the form
\[
d^{O(n)}
\]
for each such fixed-dimensional elimination step.

The result is a semi-algebraic description of $\Sigma$ in the $m$-dimensional parameter space, involving finitely many polynomial constraints of degree bounded by $d^{O(1)}$. Once $\Sigma$ has been constructed, the complement $U\setminus\Sigma$ is a semi-algebraic set in parameter space. By the Milnor--Thom bound, the number of its connected components satisfies
\[
\beta_0(U\setminus\Sigma)=O((sd)^m).
\]
Extracting one rational witness point from each component can be handled by standard decision procedures for existential formulas over the reals, and this yields the singly-exponential parameter-space contribution
\[
O\!\left(s^{m+1}d^{O(m)}\right).
\]

\paragraph{Part 2: witness extraction and exact chamber checks.}
Fix one witness parameter $\lambda_j$. After instantiation, the trisector remains a one-dimensional algebraic curve in fixed ambient dimension. We do \emph{not} reduce the curve itself to finitely many points. Rather, only the singularity test is reduced to a zero-dimensional problem: we adjoin the Jacobian-minor conditions and remove components supported at infinity.

Because this instantiated singularity computation takes place in fixed geometric dimension $n$, standard Bézout-type degree bounds imply that the corresponding quotient algebra has size at most
\[
d^{O(n)}.
\]
Thus each exact chamber check costs
\[
d^{O(n)}.
\]
Since the number of witness chambers is $O((sd)^m)$, the total cost of the chamberwise stage is
\[
O\!\left((sd)^m d^{O(n)}\right).
\]
Using the same coarse asymptotic notation as above, we record this as
\[
O\!\left(s^m d^{O(n)}\right).
\]

Combining the two parts yields
\[
O\!\left(s^{m+1}d^{O(m)} + s^m d^{O(n)}\right).
\]
This proves the stated estimate.
\end{proof}

The meaning of Proposition~\ref{prop:complexity} is qualitative rather than absolute. The estimate shows that the hard symbolic work is concentrated in parameter-space transition computation, while the chamberwise certification stage is performed on fixed-dimensional exact instances. This is precisely the sense in which the present framework avoids the full cost of an ambient CAD decomposition: instead of decomposing the entire geometric space, it isolates a transition set in parameter space and then certifies one representative from each stable chamber.

Finally, we emphasize that this section records only a worst-case symbolic bound for the abstract verification pipeline. Our actual implementation uses Z3~\cite{de2008z3} for witness extraction and Macaulay2~\cite{M2} for exact symbolic computation, both of which rely on practical heuristics and may behave quite differently from the worst-case model.

\section{Conclusion}\label{sec:conc}
We studied exact verification for bounded-dimensional families of algebraic curves arising from Voronoi diagrams, and proposed a witness-based symbolic pipeline for this setting. The pipeline separates parameter-space search from exact local checking: it first computes candidate transition walls, then samples one rational witness from each parameter region, and finally performs exact checks on the corresponding instantiated system.

As a first non-quadric bisector case, we analyzed a canonical symmetric line--line--circle trisector family in $\mathbb{R}^3$. Even in this restricted family, one bisector is quartic, the trisector has degree eight, and the behavior at infinity becomes essential. For this family, we proved affine smoothness, identified a unique point at infinity, derived the explicit direction discriminant governing the candidate walls, and used slope coordinates to determine the local branch structure at infinity. This yields the certified transition set $\Sigma=\{k=0\}\cup\{k=1\}$.

Thus, the paper should be read as a first exact mixed-object base case for witness-based symbolic verification in Voronoi diagrams. It is not a full topological classification of arbitrary mixed-object trisectors, nor a replacement for CAD in arbitrary dimension. Rather, it shows that a meaningful bounded-dimensional class can already be handled exactly by combining witness sampling with local symbolic verification.

Several natural questions remain open. First, our transition-set computation leaves open the global comparison between chambers: do the outer regimes $k<0$ and $k>1$ give the same real projective topology, or do they represent genuinely different curve types? Second, the next mixed-object base cases are the trisectors of type line--circle--circle and circle--circle--circle. Third, once such three-site predicates are understood, the next challenge is to analyze four-site mixed-object quadrisectors, with the long-term goal of exact predicates for Voronoi vertices and more general mixed-object Voronoi diagrams in $\mathbb{R}^3$.

\section*{Acknowledgement}
The authors would like to thank the members of BRL AGSTA for their extensive support throughout this research. In particular, we are deeply grateful to Jaewoo Jung and Maciej Ga\l{}azka for their insightful discussions and comments on bridging the gap between algebraic geometry and computational geometry.

\pagebreak

\small
\newpage
\bibliographystyle{plainurl}
\bibliography{Voronoi}

@article{aurenhammer1991voronoi,
  title={Voronoi diagrams—a survey of a fundamental geometric data structure},
  author={Aurenhammer, Franz},
  journal={ACM computing surveys (CSUR)},
  volume={23},
  number={3},
  pages={345--405},
  year={1991},
  publisher={ACM New York, NY, USA}
}

@book{edelsbrunner1987algorithms,
  title={Algorithms in combinatorial geometry},
  author={Edelsbrunner, Herbert},
  volume={10},
  year={1987},
  publisher={Springer Science \& Business Media}
}

@article{chazelle1996lines,
  title={Lines in space: Combinatorics and algorithms},
  author={Chazelle, Bernard and Edelsbrunner, Herbert and Guibas, Leonidas J. and Sharir, Micha and Stolfi, Jorge},
  journal={Algorithmica},
  volume={15},
  number={5},
  pages={428--447},
  year={1996},
  publisher={Springer}
}

@article{sharir1994almost,
  title={Almost tight upper bounds for lower envelopes in higher dimensions},
  author={Sharir, Micha},
  journal={Discrete \& Computational Geometry},
  volume={12},
  number={3},
  pages={327--345},
  year={1994},
  publisher={Springer}
}

@inproceedings{everett2007voronoi,
  title={The Voronoi diagram of three lines},
  author={Everett, Hazel and Lazard, Sylvain and Lazard, Daniel and El Din, Mohab Safey},
  booktitle={Proceedings of the twenty-third annual symposium on Computational geometry},
  pages={255--264},
  year={2007}
}

@article{davenport1988real,
  title={Real quantifier elimination is doubly exponential},
  author={Davenport, James H and Heintz, Joos},
  journal={Journal of Symbolic Computation},
  volume={5},
  number={1-2},
  pages={29--35},
  year={1988},
  publisher={Elsevier}
}

@book{cox1997ideals,
  title={Ideals, varieties, and algorithms},
  author={Cox, David and Little, John and O'shea, Donal and Sweedler, Moss},
  year={1997},
  publisher={Springer}
}

@book{basu2006algorithms,
  title={Algorithms in real algebraic geometry},
  author={Basu, Saugata and Pollack, Richard and Roy, Marie-Fran{\c{c}}oise},
  year={2006},
  publisher={Springer}
}

@phdthesis{anton2004voronoi,
  title={Voronoi diagrams of semi-algebraic sets},
  author={Anton, Fran{\c{c}}ois},
  year={2004},
  school={University of British Columbia}
}

@inproceedings{hemmer2010constructing,
  title={Constructing the Exact Voronoi Diagram of Arbitrary Lines in Three-Dimensional Space: with Fast Point-Location},
  author={Hemmer, Michael and Setter, Ophir and Halperin, Dan},
  booktitle={European Symposium on Algorithms},
  pages={398--409},
  year={2010},
  organization={Springer}
}

@inproceedings{berberich2011generic,
  title={A generic algebraic kernel for non-linear geometric applications},
  author={Berberich, Eric and Hemmer, Michael and Kerber, Michael},
  booktitle={Proceedings of the twenty-seventh annual symposium on Computational geometry},
  pages={179--186},
  year={2011}
}

@article{schwartz1983piano,
  title={On the “piano movers” problem. II. General techniques for computing topological properties of real algebraic manifolds},
  author={Schwartz, Jacob T and Sharir, Micha},
  journal={Advances in applied Mathematics},
  volume={4},
  number={3},
  pages={298--351},
  year={1983},
  publisher={Elsevier}
}

@inproceedings{hanniel2008computing,
  title={Computing the Voronoi cells of planes, spheres and cylinders in R3},
  author={Hanniel, Iddo and Elber, Gershon},
  booktitle={Proceedings of the 2008 ACM symposium on Solid and physical modeling},
  pages={47--58},
  year={2008}
}

@article{adamou2021computing,
  title={Computing the Topology of Vorono{\"\i} Diagrams of Parallel Half-Lines},
  author={Adamou, Ibrahim and Mourrain, Bernard},
  journal={Mathematics in Computer Science},
  volume={15},
  number={4},
  pages={859--876},
  year={2021},
  publisher={Springer}
}

@article{fulton1969algebraic,
  title={Algebraic curves: an introduction to algebraic geometry},
  author={Fulton, William and Weiss, Richard},
  journal={(No Title)},
  year={1969}
}

@Misc{M2,
  author = {Grayson, Daniel R. and Stillman, Michael E.},
  title = {Macaulay2, a software system for research in algebraic geometry},
  howpublished = {Available at \url{http://www2.macaulay2.com}}
}

@inproceedings{de2008z3,
  title={Z3: An efficient SMT solver},
  author={De Moura, Leonardo and Bj{\o}rner, Nikolaj},
  booktitle={International conference on Tools and Algorithms for the Construction and Analysis of Systems},
  pages={337--340},
  year={2008},
  organization={Springer}
}

@inproceedings{alt1995voronoi,
  title={The Voronoi diagram of curved objects},
  author={Alt, Helmut and Schwarzkopf, Otfried},
  booktitle={Proceedings of the eleventh annual symposium on Computational geometry},
  pages={89--97},
  year={1995}
}

@article{barequet2024unbounded,
  title={Unbounded regions of high-order Voronoi diagrams of lines and line segments in higher dimensions},
  author={Barequet, Gill and Papadopoulou, Evanthia and Suderland, Martin},
  journal={Discrete \& Computational Geometry},
  volume={72},
  number={3},
  pages={1304--1332},
  year={2024},
  publisher={Springer}
}

@misc{papadopoulou2026voronoidiagramlinesmathbbr3,
      title={The Voronoi Diagram of Four Lines in $\mathbb{R}^3$}, 
      author={Evanthia Papadopoulou and Zeyu Wang},
      year={2026},
      eprint={2603.19836},
      archivePrefix={arXiv},
      primaryClass={cs.CG},
      url={https://arxiv.org/abs/2603.19836}, 
}

@article{lee2022robust,
  title={Robust construction of Voronoi diagrams of spherical balls in three-dimensional space},
  author={Lee, Mokwon and Sugihara, Kokichi and Kim, Deok-Soo},
  journal={Computer-Aided Design},
  volume={152},
  pages={103374},
  year={2022},
  publisher={Elsevier}
}

@inproceedings{emiris2006predicates,
  title={The predicates for the Voronoi diagram of ellipses},
  author={Emiris, Ioannis Z and Tsigaridas, Elias P and Tzoumas, George M},
  booktitle={Proceedings of the twenty-second annual symposium on Computational geometry},
  pages={227--236},
  year={2006}
}

@article{cifuentes2022voronoi,
  title={Voronoi cells of varieties},
  author={Cifuentes, Diego and Ranestad, Kristian and Sturmfels, Bernd and Weinstein, Madeleine},
  journal={Journal of symbolic computation},
  volume={109},
  pages={351--366},
  year={2022},
  publisher={Elsevier}
}

\newpage
\appendix

\section{Computer Algebra and SMT Verification Scripts}\label{app:scripts}

To guarantee the exactness and reproducibility of our hybrid topological verification framework, we provide the complete scripts used for the algebraic computations. The scripts are written for Macaulay2 (for exact Gröbner basis and ideal saturation) and Z3 (for strict singly-exponential SMT witness sampling). 
Please note that while the radius is denoted as $R$ in the main text, it is represented by the variable \texttt{rad} in the provided computer algebra scripts due to variable naming constraints in the systems.

This appendix is organized according to the computational roles in the verification pipeline:
\ref{Mac:Affine} extracts affine candidate walls, \ref{Mac:Proj} checks projective behavior at infinity, \ref{Z3:SMT} samples rational witnesses in parameter space, and \ref{Mac:ZeroDim} performs instantiated exact smoothness checks.

\subsection{Extraction of the Affine Singularity Discriminant (Macaulay2)}\label{Mac:Affine}
This script extracts the exact bifurcation set $\Sigma_{\text{affine}}$ where the space curve loses its smoothness in the finite affine space.

\begin{lstlisting}[language=Ruby, caption=Macaulay2 script for Jacobian evaluation and discriminant extraction.]
-- 1. Define the ring containing parameters and spatial variables
-- c and s are cos and sin of the crossing angle between the two lines.
Rring = QQ[h, k, rad, c, s, x, y, z, u, v]

-- 2. Set distance functions and geometric constraints
-- Difference in distance to L1 and L2 (bisector of two lines)
F1 = y^2 - (x*s - y*c)^2 + 2*z*h - h^2

-- Difference in distance to L1 and circle C
F2 = -x^2 + 2*x*u + 2*y*v + 2*z*k - rad^2 - k^2

-- Condition for the projection point (u, v) on circle C
F3 = u^2 + v^2 - rad^2

-- Orthogonality condition for the shortest distance to the circle
F4 = y*u - x*v

-- Trigonometric identity constraint (crossing angle)
F5 = c^2 + s^2 - 1

-- 3. Generate the ideal of the curve
I = ideal(F1, F2, F3, F4, F5)

-- 4. Manually construct the 4x5 Jacobian matrix with respect to spatial variables (x,y,z,u,v)
-- Do not differentiate with respect to parameters (h,k,rad,c,s).
Jac = matrix {
  {diff(x,F1), diff(y,F1), diff(z,F1), diff(u,F1), diff(v,F1)},
  {diff(x,F2), diff(y,F2), diff(z,F2), diff(u,F2), diff(v,F2)},
  {diff(x,F3), diff(y,F3), diff(z,F3), diff(u,F3), diff(v,F3)},
  {diff(x,F4), diff(y,F4), diff(z,F4), diff(u,F4), diff(v,F4)}
}

-- 5. Singularity condition: Jacobian rank drops below 4 (all 4x4 minors are 0)
SingCond = minors(4, Jac)

-- 6. Construct the full singular ideal (exists on curve I and satisfies SingCond)
SingIdeal = I + SingCond

-- 7. Eliminate spatial variables (x,y,z,u,v) to derive relations only in parameters (h,k,rad,c,s)
DiscriminantIdeal = eliminate({x, y, z, u, v}, SingIdeal)

-- 8. Output final discriminant
gens DiscriminantIdeal
\end{lstlisting}

\subsection{Projective Closure and Saturation (Macaulay2)}\label{Mac:Proj}
This script illustrates the projective-closure computation and the identification of the point at infinity.

\begin{lstlisting}[language=Ruby, caption=Macaulay2 script for projective closure and computing the intersection at infinity.]
-- 1. Define the ring: Homogeneous variables X, Y, Z, W in projective space P^3
R = QQ[X, Y, Z, W]

-- 2. Set arbitrary rational parameters representing general position
h = 1; k = 2; rad = 3; 
c = 3/5; s = 4/5; -- cos(alpha), sin(alpha)

-- 3. Generators obtained by simple homogenization of affine polynomials (F1^h, F2^h)
F1h = Y^2 - (X*s - Y*c)^2 + 2*h*Z*W - h^2*W^2
F2h = (X^2 - 2*k*Z*W + (rad^2 + k^2)*W^2)^2 - 4*rad^2*W^2*(X^2 + Y^2)

-- 4. Generate the simple homogeneous ideal
I_homogenized = ideal(F1h, F2h)

-- 5. Saturation operation to obtain the true projective closure
-- Identifies all polynomials that belong to the ideal when multiplied by W^n, 
-- algebraically removing embedded components on the hyperplane at infinity (W=0).
I_proj = saturate(I_homogenized, ideal(W))

-- 6. Compute intersection ideal with the hyperplane at infinity (find true points at infinity)
I_infty = I_proj + ideal(W)

-- 7. Output properties of the solution set at infinity
-- A dimension of 1 means it is 0-dimensional (a finite set of points) in P^3.
dim I_infty
-- The degree should be 8, which is the total degree of the trisector.
degree I_infty

-- 8. Algebraic verification of coordinates at infinity
gens gb I_infty
\end{lstlisting}

\subsection{Witness Point Sampling (Z3 SMT Solver)}\label{Z3:SMT}
This script safely samples strict rational coordinates deeply embedded within each isolated topological cell, deliberately avoiding the singularity manifolds.

\begin{lstlisting}[language=Python, caption=Z3 script for rational witness extraction.]
from z3 import *

# 1. Declare Z3 Real variables
k, rad, t = Reals('k rad t')

# 2. Define the 6 topological connected components partitioning the parameter space
# Boundary conditions: k=0, k=1 (h=1), t=0, rad=0
regions = [
    {"name": "Component 1 (k > h, t > 0)", "conds": [k > 1, rad > 0, t > 0]},
    {"name": "Component 2 (k > h, t < 0)", "conds": [k > 1, rad > 0, t < 0]},
    {"name": "Component 3 (0 < k < h, t > 0)", "conds": [k > 0, k < 1, rad > 0, t > 0]},
    {"name": "Component 4 (0 < k < h, t < 0)", "conds": [k > 0, k < 1, rad > 0, t < 0]},
    {"name": "Component 5 (k < 0, t > 0)", "conds": [k < 0, rad > 0, t > 0]},
    {"name": "Component 6 (k < 0, t < 0)", "conds": [k < 0, rad > 0, t < 0]},
]

print("Starting Z3 Solver to extract witness points for each topological component...\n")

# 3. Run Solver for each region to extract a witness
for region in regions:
    s = Solver()
    
    # Add inequality constraints for the region
    for cond in region["conds"]:
        s.add(cond)
        
    # Check if a solution exists (sat)
    if s.check() == sat:
        m = s.model()
        
        # Extract the result as an exact fraction
        # (To prevent floating-point errors in Macaulay2 symbolic computations)
        k_val = m[k].as_fraction()
        rad_val = m[rad].as_fraction()
        t_val = m[t].as_fraction()
        
        print(f"--- {region['name']} ---")
        print(f"Found Witness: k = {k_val}, rad = {rad_val}, t = {t_val}")
        
        # Format that can be directly copied/pasted into a Macaulay2 script
        print("[Macaulay2 Input]")
        print(f"h = 1; k = {k_val}; rad = {rad_val}; t = {t_val};")
        print("-" * 50)
    else:
        print(f"--- {region['name']} ---")
        print("No witness found (Region is mathematically empty).")
        print("-" * 50)
\end{lstlisting}

\subsection{Exact Verification that the Affine Singular Locus is Empty (Macaulay2)}\label{Mac:ZeroDim}
In this script, we substitute the SMT-extracted witness point into the parametric system. While the geometric curve itself remains 1-dimensional in affine space, our objective is to prove smoothness. Therefore, we isolate the affine singular locus ($Sing_{affine}$) and demonstrate that it is shown to be empty, equivalently the ideal is the unit ideal, satisfying the Weak Nullstellensatz.

\begin{lstlisting}[language=Ruby, caption={Macaulay2 script verifying the absence of affine singularities. The script explicitly demonstrates that the singular ideal is trivial.}, label={lst:m2_singular}]
-- 1. Define the homogeneous coordinate ring over the rational field
R = QQ[X,Y,Z,W];

-- 2. Instantiate the SMT-extracted rational witness point
-- (Example from Component 3: k=1/4, R=1/8, t=1)
h = 1; k = 1/4; rad = 1/8; t = 1;
c = (1-t^2)/(1+t^2); s = (2*t)/(1+t^2);

-- 3. Define the instantiated homogeneous generators
F1h = Y^2 - (X*s - Y*c)^2 + 2*h*Z*W - h^2*W^2;
F2h = (X^2 - 2*k*Z*W + (rad^2 + k^2)*W^2)^2 - 4*rad^2*W^2*(X^2 + Y^2);

-- 4. Compute the projective closure of the space curve
I_proj = saturate(ideal(F1h, F2h), ideal(W));
-- Note: dim(I_proj) = 2, representing a 1D space curve in P^3.

-- 5. Construct the Projective Singular Locus
SingProj = I_proj + minors(2, jacobian(I_proj));

-- 6. Isolate the Affine Singular Locus by saturating out the hyperplane at infinity
SingAffine = saturate(SingProj, ideal(W));

-- 7. Weak Nullstellensatz Validation
-- If the affine singular locus is empty, the ideal must be trivial (ideal(1_R)).
SingAffine == ideal(1_R) 
\end{lstlisting}

\subsection{Extraction of the Infinity Discriminant (Macaulay2)}\label{Mac:InfinityDisc}
This script records the exact local computation at the unique point at infinity
$p_\infty=[0:0:1:0]$ in the normalized chart $Z=1$.
Its purpose is to extract the quadratic tangent-direction form
\[
E(X,Y)=X^2-2kQ(X,Y),
\qquad
Q(X,Y)=\frac12\bigl(s^2X^2-2sc\,XY-s^2Y^2\bigr),
\]
and to verify that its discriminant is
\[
\Delta_Q = 4ks^2(k-1).
\]
In particular, this script makes explicit the exact algebraic quantity governing the
candidate asymptotic degeneration locus at infinity.
Note that the radius parameter \texttt{rad} does not appear in the final discriminant,
since the tangent-cone computation depends only on the lowest-order local terms.

\begin{lstlisting}[language=Ruby, caption={Macaulay2 script for extracting the tangent-direction form at infinity and computing its discriminant.}, label={lst:m2_infinity_discriminant}]
-- 1. Work in the normalized parameter setting h = 1
-- Impose the trigonometric relation c^2 + s^2 = 1 at the ring level.
S = QQ[k, rad, c, s]/ideal(c^2 + s^2 - 1)

-- 2. Local affine chart around p_infinity = [0:0:1:0] with Z = 1
R = S[X, Y, W]

-- 3. Local form of the first homogenized bisector in the chart Z = 1
-- F1^h(X,Y,1,W) = Y^2 - (X*s - Y*c)^2 + 2W - W^2
F1loc = Y^2 - (X*s - Y*c)^2 + 2*W - W^2

-- 4. Extract the quadratic part of the local solution W = Q(X,Y) + higher order terms
-- Since W has order 2 locally, the term W^2 contributes only in order >= 4.
Q = ((X*s - Y*c)^2 - Y^2)/2

-- 5. Form the tangent-direction polynomial
-- E(X,Y) = X^2 - 2*k*Q(X,Y)
E = X^2 - 2*k*Q

-- 6. Display the quadratic form explicitly
E

-- 7. Read off the coefficients of
-- E = a*X^2 + b*X*Y + c2*Y^2
a  = 1 - k*s^2
b  = 2*k*s*c
c2 = k*s^2

-- 8. Compute and factor the discriminant
DeltaQ = b^2 - 4*a*c2
factor DeltaQ
\end{lstlisting}

\subsection{Generic three-line benchmark (Macaulay2)}

The following script records the generic three-line witness used in the benchmark section and verifies the emptiness of the affine singular locus.

\begin{lstlisting}[language=Ruby, caption={Macaulay2 script for the generic three-line benchmark.}, label={lst:m2_three_line_generic}]
-- 1. Define the homogeneous coordinate ring
R = QQ[X,Y,Z,W];

-- 2. Instantiate a generic rational witness point
a = 2; u = 1; v = 2; alpha = 1; beta = 1;

-- 3. Define the homogenized bisector equations
F12h = Z*W*(1+a^2) + a*X*Y;

F13h = ((u^2 + v^2)*X^2 + (alpha^2 + v^2)*Y^2 - (u^2 + alpha^2)*Z^2
     - 2*u*alpha*X*Y - 2*v*alpha*X*Z - 2*u*v*Y*Z
     + 2*beta*(u^2 + v^2)*X*W - 2*a*beta*(u^2 + v^2)*Y*W);

-- 4. Compute true projective closure
I_proj = saturate(ideal(F12h, F13h), ideal(W));

-- 5. Projective singular locus
SingProj = I_proj + minors(2, jacobian(I_proj));

-- 6. Affine singular locus
SingAffine = saturate(SingProj, ideal(W));

-- 7. Degree and emptiness check
degree I_proj
SingAffine == ideal(1_R)
\end{lstlisting}

\subsection{Degenerate cubic-plus-line three-line benchmark (Macaulay2)}

Finally, the following script verifies the classical degenerate case in which the three-line trisector splits into a nonsingular cubic and a line.

\begin{lstlisting}[language=Ruby, caption={Macaulay2 script for the degenerate cubic-plus-line three-line benchmark.}, label={lst:m2_three_line_degenerate}]
-- 1. Define the ring over the rational field
R = QQ[X,Y,Z,W]

-- 2. Exact degenerate witness
a = 2;

-- 3. Homogenized bisector equations
F12h = 2*X*Y + 5*Z*W;
F13h = (3*X^2 - 4*Y^2 + Z^2 - 4*X*Y + 4*X*Z + 20*X*W
     - 40*Y*W - 50*Z*W - 175*W^2);

-- 4. True projective closure
I_proj = saturate(ideal(F12h, F13h), ideal(W))

-- 5. Primary decomposition
comps = decompose I_proj
apply(comps, degree)

-- 6. Separate the cubic and the line
CubicIdeal = if degree comps_0 == 3 then comps_0 else comps_1
LineIdeal  = if degree comps_0 == 1 then comps_0 else comps_1

-- 7. Verify nonsingularity of the cubic
SingCubic = CubicIdeal + minors(2, jacobian(CubicIdeal))
dim saturate(SingCubic, ideal(W))

-- 8. Compute the intersection
IntersectionIdeal = CubicIdeal + LineIdeal
dim IntersectionIdeal

-- 9. Check intersection at infinity
dim(IntersectionIdeal + ideal(W))

-- 10. Dehomogenize and eliminate X, Y
Raffine = QQ[X,Y,Z]
AffineIntersection = substitute(IntersectionIdeal, {X=>X, Y=>Y, Z=>Z, W=>1})
ElimZ = eliminate({X, Y}, AffineIntersection)

-- 11. Output elimination polynomial
gens gb ElimZ
\end{lstlisting}

\end{document}